\titlespacing{\subsection}{0pt}{1ex}{0pt}
\newtheorem{theorem}{Theorem}[section]
\newtheorem{lemma}{Lemma}[section]
\newtheorem{definition}{Definition}[section]
\newtheorem{corollary}{Corollary}[section]
\newtheorem{proposition}{Proposition}[section]
\newtheorem{example}{Example}[section]
\newtheorem{remark}{Remark}[section]
\newtheorem{conjecture}{Conjecture}[section]
\newtheorem{problem}{Problem}[]
\begin{document}

\begin{frontmatter}
\title{Dual and Covering Radii of Extended Algebraic Geometry Codes}
\author[sysu]{Yunlong Zhu}
\ead{zhuylong3@mail2.sysu.edu.cn}
\author[sysu,ist]{Chang-An Zhao\corref{cor1}}
\cortext[cor1]{Corresponding author}
\ead{zhaochan3@mail.sysu.edu.cn}
	
\fntext[fn1]{This work is supported  by  Guangdong Basic and Applied Basic Research Foundation of China~(No. 2025A1515011764), the National Natural Science Foundation of China (No. 12441107), Guangdong Major Project of Basic and Applied Basic Research (No. 2019B030302008) and Guangdong Provincial Key Laboratory of Information Security Technology(No. 2023B1212060026). }
	
\address[sysu]{Department of Mathematics, School of Mathematics, Sun Yat-sen University, Guangzhou 510275, P.R.China.}
\address[ist]{Guangdong Key Laboratory of Information Security Technology, Guangzhou 510006,  P.R.China.}

\date{\today}	
	
\begin{abstract}
Many literatures consider the extended Reed-Solomon (RS) codes, including their dual codes and covering radii, but few focus on extended algebraic geometry (AG) codes of genus $g\ge1$. In this paper, we investigate extended AG codes and Roth-Lempel type AG codes, including their dual codes and minimum distances. Moreover, we show that for certain $g$, the length of a $g$-MDS code over a finite field $\mathbb{F}_q$ can attain $q+1+2g\sqrt{q}$, which is achieved by an extended AG code from the maximal curves of genus $g$. Notably, for some small finite fields, this length $q+1+2g\sqrt{q}$ is the largest among all known $g$-MDS codes. Subsequently, we establish that the covering radius of an $[n,k]$ extended AG code has $g+2$ possible values. For the case of $g=1$, we prove that this range reduces to two possible values when the length $n$ is sufficiently large, or when there exists an $[n,k+1]$ MDS elliptic code.
\end{abstract}
\begin{keyword}
	Algebraic geometry codes, extended codes, covering radius, MDS codes, NMDS codes.
\end{keyword}
	
\end{frontmatter}

\section{Introduction}
Let $\mathbb{F}_q$ be a finite field with $q$ elements. An $[n,k]$-linear code $\mathcal{C}$ is a $k$-dimensional subspace of $\mathbb{F}_q^n$. For a non-zero vector ${\bf v}=(v_1,\ldots,v_n)\in\mathbb{F}_q^n$, the Hamming weight $wt({\bf v})$ is defined as the number of non-zero positions in ${\bf v}$, i.e.,
\[
	{\rm wt}({\bf v}):=\#\{i\mid v_i\neq0,1\le i\le n\}.
\]
The distance between two vectors ${\bf v}$ and ${\bf u}$ is given by $d({\bf v},{\bf u})={\rm wt}({\bf v-u})$ and the minimum distance of $\mathcal{C}$ is 
\[
	d:=\min\{d({\bf v},{\bf u})|{\bf v,u}\in \mathcal{C}\}.
\]

For an $[n,k,d]$-linear code $\mathcal{C}$, the well-known Singleton bound states that
\[
	d+k\le n+1.
\]
Linear codes achieving this equality are known as maximum distance separable (MDS) codes. The Singleton defect \cite{Boer96} of $\mathcal{C}$, denoted $s(\mathcal{C})$, is defined as 
\[
	s(\mathcal{C})=n+1-k-d.
\]
A code $\mathcal{C}$ is called $\ell$-MDS if $s(\mathcal{C})=s(\mathcal{C}^{\perp})=\ell$ \cite{Li-Zhu-NMDS}. In particular, a $1$-MDS code, is also called a near-MDS (NMDS) code. A central problem in coding theory involves determining the maximum possible length of $\ell$-MDS codes. We recall the famous MDS conjecture proposed by Segre \cite{Segre}:
\begin{conjecture}\label{serges}
Let $\mathcal{C}$ be an $[n,k]$ MDS code over $\mathbb{F}_q$. Then
\begin{align*}
n\le\begin{cases}
q+2\ \text{\rm if $q$ is even and $k\in\{3,q-1\}$},\\
q+1\ \text{\rm otherwise}.
\end{cases}
\end{align*}
\end{conjecture}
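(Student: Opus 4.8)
The statement to be proven is Segre's MDS conjecture, which at present is open in general; I outline the approach one would take and where it stalls. The plan is to pass to the geometric dual picture: choosing a generator matrix $G$ of an $[n,k]$ MDS code identifies its columns with a set $\mathcal{A}$ of $n$ points in $\mathrm{PG}(k-1,q)$ no $k$ of which lie on a common hyperplane — an \emph{$n$-arc}. The MDS property is exactly this ``no $k$ on a hyperplane'' condition, and it is preserved under taking duals (an $[n,k]$ MDS code dualizes to an $[n,n-k]$ MDS code), so Conjecture~\ref{serges} becomes: the largest arc in $\mathrm{PG}(k-1,q)$ has $q+1$ points, except that in $\mathrm{PG}(2,q)$ and $\mathrm{PG}(q-2,q)$ with $q$ even there exist arcs of size $q+2$. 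I would first clear the trivial ranges $k\le 2$ and $k\ge q-1$, where the bound is elementary or follows from the classification of large arcs in low codimension, and use the duality above to restrict to $3\le k\le \lceil (q+1)/2\rceil$.

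The core tool is Segre's \emph{lemma of tangents}. For $k=3$ and $q$ odd, given a putative $(q+1)$-arc in $\mathrm{PG}(2,q)$ one counts, for each ordered triple of arc points, the tangent lines at two of them meeting the line through the third; the resulting product identity over $\mathbb{F}_q^{*}$ forces all $q+1$ points onto an irreducible conic and rules out size $q+2$. For larger $k$ I would run the Segre--Thas induction: project $\mathcal{A}$ from one of its points onto a hyperplane to obtain an arc in $\mathrm{PG}(k-2,q)$ with $n-1$ points, iterate down to the plane or space case, and at each stage invoke a higher-dimensional version of the lemma of tangents to bound how large $n$ can be; in even characteristic one must additionally track the exceptional ``nucleus'' configurations that account for the $q+2$ cases.

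For the subcase $q=p$ prime I would follow Ball's proof. One associates to a large arc a controlled family of homogeneous polynomials vanishing on appropriate sub-configurations, applies the lemma of tangents to extract a nonzero polynomial of suitably bounded degree vanishing on all of $\mathcal{A}$, and then derives a contradiction from a sharp estimate on the size of the common zero set of such polynomials over $\mathbb{F}_p$, using divisibility properties of binomial coefficients modulo $p$ in the spirit of the Combinatorial Nullstellensatz. The main obstacle — and precisely why the full conjecture remains unproven — is this last step: the polynomial zero-set estimates that drive the prime case have no known analogue over $\mathbb{F}_{p^h}$ with $h\ge 2$, so the general case, together with a complete account of the even-characteristic exceptional arcs beyond $k=3$ and $k=q-1$, is out of reach by these methods.
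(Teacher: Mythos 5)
The statement you were asked to prove is Conjecture~\ref{serges}, Segre's MDS conjecture, which the paper states without proof and explicitly describes as open; there is no proof in the paper to compare against, so the honest conclusion of your proposal---that no complete proof can be given---is the correct one. Your survey of the known approach is accurate: the identification of $[n,k]$ MDS codes with $n$-arcs in $\mathrm{PG}(k-1,q)$, the preservation of the MDS property under duality, the exceptional (hyperoval) cases for $q$ even with $k\in\{3,q-1\}$, Segre's lemma of tangents resolving the planar odd-$q$ case, and Ball's polynomial-method proof for $q$ prime (cf.~\cite{Ball-prime}), together with the fact that these techniques do not currently extend to $q=p^{h}$ with $h\ge 2$. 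There is no gap to flag beyond the one you already name, which is precisely why the statement remains a conjecture rather than a theorem in the paper.
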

\noindent While this conjecture remains open in general, partial results can be found, for example, in \cite{Ball-prime,Ball-ld,Walker-conjecture}.

The covering radius $\rho(\mathcal{C})$ of a linear code $\mathcal{C}$ is the smallest integer such that spheres of radius $\rho(\mathcal{C})$ centered at all codewords cover $\mathbb{F}_q^n$. This concept represents a fundamental parameter in coding theory \cite{Cohen1997}, with many further investigations in \cite{Zhang-Wan-deephole,Cohen1985,Helleseth1978,Hou92,Hou93,Graham1985}. Vectors at distance $\rho(\mathcal{C})$ from $\mathcal{C}$ are called deep holes. Notably, deep holes of MDS codes preserve the MDS property when appended to their generator matrices, as proven by Kaipa \cite{Kaipa-deepERS} and Wu et al. \cite{Wuyansheng-when}.

Both the minimum distance and the covering radius determination for general linear codes are NP-hard problems \cite{Vardy97,McLoughlin}. For algebraic geometric (AG) codes, the minimum distance satisfies $d\ge d^*$ where $d^*$ denotes the designed distance. However, bounds on the covering radii remain scarce; we refer the reader to \cite{Orozco-Janwa-CR}, \cite{Janwa90} for some results. When $n\le q$, Zhuang et al. proposed that the Reed-Solomon (RS) codes over $\mathbb{F}_q$ achieve a covering radius $\rho=n-k$ \cite{Zhuang-deepGRS}. Bartoli et al. demonstrated that there exist specific MDS elliptic codes with $\rho=n-k-1$ \cite{Bartoli-covering}. Zhang and Wan further showed that elliptic codes also have $\rho=n-k-1$ for sufficiently large $n$ \cite{Zhang-Wan-deephole}. Nevertheless, the covering radii of general MDS codes remain largely undetermined \cite{Zhang-Wan-Kaipa-PRS}.

\subsection*{Motivations and Contributions}
Extended AG codes from curves of genus $g$ are observed to achieve the maximum length among all known $g$-MDS codes over small finite fields, as recorded in the MAGMA database \cite{Magma}. This motivates investigating their maximality over arbitrary finite fields $\mathbb{F}_q$. While the existing literature primarily focuses on the genus $0$ case, extended AG codes with $g \geq 1$ remain largely unexplored, to the best of our knowledge. 

For $g=0$, extended RS codes are the well-known MDS codes with length $n=q+1$. Under the MDS conjecture, these codes become maximal MDS codes when $4 \leq k \leq q-2$. Zhang et al. investigated the covering radii of extended RS codes \cite{Zhang-Wan-Kaipa-PRS}, and showed that these codes have two possible values. Very recently, Wu et al. proposed improved results for extended RS codes \cite{Wuyansheng-when}, while Wu et al. further determined the covering radii of non-standard extended RS codes \cite{Wu2024mds}, including Roth-Lempel codes \cite{Roth-nRS}. Additionally, Li et al. analyzed the covering radii of two extended twisted generalized RS codes \cite{Li-Zhu-ETGRS}.

Consequently, a natural problem arises for the case of $g\ge1$: can one determine the dual codes, minimum distances and covering radii of extended AG codes? Our main results are summarized as follows.
\begin{itemize}
	\item Utilizing the Weil differential, we explicitly determine the generator matrix of the dual codes for extended AG codes. By establishing an isomorphism between a function space and a differential space over $\mathbb{F}_q$ via a canonical divisor of function fields, we propose a function-based generator matrix for the dual code of an extended AG code, which generalizes the result for standard AG codes. Subsequently, we explicitly derive the lower bound of the minimum distances for extended AG codes, and investigate the MDS property of extended AG codes. Furthermore, we consider three cases of AG codes, from the projective line, elliptic curves and Hermitian curves respectively. We then present the explicit dual code and minimum distances for these codes.
	\item We explicitly determine dual code generators for Roth-Lempel type AG codes, providing both differential-based and function-based generator matrices. A sufficient condition for the minimum distance of these codes is established. For elliptic curves, we further establish a condition for Roth-Lempel-type elliptic codes to be NMDS.
	\item We adapt covering radius bounds from \cite{Janwa90} to extended AG codes. Furthermore, for elliptic curves, we prove that the covering radius of extended elliptic codes has two possible values when $n$ is sufficiently large or when an $[n,k+1]$ MDS elliptic code exists.
\end{itemize}
\subsection*{Organization}
This paper is organized as follows. In Section II, we provide an essential overview of algebraic geometry codes, elliptic curves and covering radii. Section III details our main results on extended AG codes and Roth-Lempel type AG codes. In Section IV, we discuss the covering radius bounds for extended AG codes. Finally, Section V concludes the paper and discusses our future research directions.

\section{Preliminaries}
In this section, we review fundamental definitions and concepts, including AG codes and covering radii for linear codes.
\subsection{Algebraic Geometry Codes}
Let $\mathcal{X}$ be an absolutely irreducible smooth algebraic curve defined over $\mathbb{F}_q$ with genus $g(\mathcal{X})$. The function field $F(\mathcal{X})$ of $\mathcal{X}$ is a finite extension of the rational function field $\mathbb{F}(x)$ where $x$ is transcendental over $\mathbb{F}_q$.

The set $\mathcal{X}(\mathbb{F}_q)$ of rational points corresponds to degree-one places of $F(\mathcal{X})$. A divisor $G$ is expressed as a formal sum of places:
\[
	G=\sum\limits_Pn_{P}P,
\]
where
\[
 \deg(G)=\sum\limits_Pn_{P}\ \text{and}\ \text{\rm supp}(G)=\{P | v_{P}(G)\neq0\}.
\]
For a function $f \in \mathbb{F}(\mathcal{X})$, the principal divisor is denoted by $(f)$. The Riemann-Roch space associated with a divisor $G$ is given by 
\[
	\mathcal{L}(G) := \{f\in F(\mathcal{X})\setminus\{0\}: (f)+G\ge 0\} \cup \{0\},
\]
with dimension $\ell(G)$ over $\mathbb{F}_q$. Let $P_1,\ldots,P_n$ be distinct degree-one places of $F(\mathcal{X})$ not in $\text{\rm supp}(G)$ with $n\le \#\mathcal{X}(\mathbb{F}_q)$. Let $D=P_1+\cdots+P_n$ and consider the evaluation map:
\begin{align*}
   \text{\rm ev}_D:  \mathcal{L}(G)&\to \mathbb{F}_q^n,\\
       f&\mapsto (f(P_1),\ldots,f(P_n)).
\end{align*}
The AG code $C_L(D,G)$, is defined as the image of $\text{\rm ev}_D$. The parameters of $C_L(D,G)$ are given by:
\[
	k=\ell(G)-\ell(G-D), d\ge n-\deg(G).
\]
It can be verified straightforwardly that $\text{\rm ev}_D$ is an embedding when $\deg(G)<n$ and $k=\ell(G)$. By the Riemann-Roch theorem, we obtain
\[
    \ell(G)= \deg(G)-g(\mathcal{X})+1,
\]
provided $\deg(G)\ge 2g(\mathcal{X})-1$. Moreover, the minimum distance satisfies $d\ge d^*$ where
\[
    d^*=n-k-g(\mathcal{X})+1
\]
is called the designed distance of $C_L(D,G)$.
\begin{lemma}\label{generator}\cite[Theorem 13.4.3]{Huffman-fecc}
Let $\{f_1,f_2,\ldots,f_{\ell}\}$ be a basis of $\mathcal{L}(G)$. Then the matrix
\begin{equation*}
	G:=\begin{pmatrix}
		f_1(P_1)& f_1(P_2)&\cdots &f_1(P_n)\\
		f_2(P_1)& f_2(P_2)&\cdots &f_2(P_n)\\
		\vdots&\vdots&\ddots&\vdots\\
		f_{\ell}(P_1)& f_{\ell}(P_2)&\cdots &f_{\ell}(P_n)\\
	\end{pmatrix}
\end{equation*}
is a generator matrix for $C_L(D,G)$.
\end{lemma}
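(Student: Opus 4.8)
The plan is to argue directly from the definition of $C_L(D,G)$ as the image of the evaluation map $\text{ev}_D$, using only linearity of $\text{ev}_D$ together with a short valuation argument to pin down its kernel. First I would record that $\text{ev}_D\colon \mathcal{L}(G)\to\mathbb{F}_q^n$ is $\mathbb{F}_q$-linear: for $f,h\in\mathcal{L}(G)$ and $a,b\in\mathbb{F}_q$, the function $af+bh$ again lies in the vector space $\mathcal{L}(G)$, and $(af+bh)(P_i)=af(P_i)+bh(P_i)$ at each degree-one place $P_i\notin\text{supp}(G)$. Hence, writing an arbitrary $f\in\mathcal{L}(G)$ in the given basis as $f=\sum_{j=1}^{\ell}a_jf_j$, we obtain $\text{ev}_D(f)=\sum_{j=1}^{\ell}a_j\big(f_j(P_1),\ldots,f_j(P_n)\big)$, i.e. an $\mathbb{F}_q$-linear combination of the rows of the displayed matrix. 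Since $C_L(D,G)=\text{ev}_D(\mathcal{L}(G))$ by definition and $\{f_1,\ldots,f_\ell\}$ spans $\mathcal{L}(G)$, this already shows that the row space of the matrix equals $C_L(D,G)$.

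Next I would verify that the matrix is genuinely a generator matrix, i.e. that its $\ell$ rows are linearly independent, equivalently that $\text{ev}_D$ is injective on $\mathcal{L}(G)$. The key step is the standard computation of $\ker\text{ev}_D$: if $f\in\mathcal{L}(G)$ satisfies $f(P_i)=0$ for all $i$, then $v_{P_i}(f)\ge 1$ for every $i$, so $(f)+G-D\ge 0$ and hence $f\in\mathcal{L}(G-D)$. In the regime relevant to this paper, where $\deg(G)<n=\deg(D)$, we have $\deg(G-D)<0$; since a nonzero $f\in\mathcal{L}(H)$ would force $\deg(H)=\deg((f)+H)\ge 0$, it follows that $\mathcal{L}(G-D)=\{0\}$ and therefore $f=0$. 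Thus $\ker\text{ev}_D=\mathcal{L}(G-D)=\{0\}$, the images $\text{ev}_D(f_j)$ inherit the linear independence of the $f_j$, and the matrix is an $\ell\times n$ array of rank $\ell=\ell(G)=k$ whose row space is $C_L(D,G)$.

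I do not anticipate any real obstacle here: the statement is essentially a repackaging of the fact that a linear map carries a spanning set to a spanning set, and the only genuine content is the rank claim, which reduces to the classical identity $\ker\text{ev}_D=\mathcal{L}(G-D)$ plus the triviality of that space once $\deg(G)<n$. If one wanted the lemma without the hypothesis $\deg(G)<n$, the argument still shows that the matrix generates $C_L(D,G)$ (with possibly redundant rows), and the dimension is then $\ell(G)-\ell(G-D)$ as recalled in the excerpt; for all applications in this paper $\text{ev}_D$ is an embedding, so this subtlety never intervenes.
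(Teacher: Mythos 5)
Your proof is correct, and it is the standard argument: linearity of $\mathrm{ev}_D$ shows the rows span $C_L(D,G)$, and the identification $\ker\mathrm{ev}_D=\mathcal{L}(G-D)$ together with $\deg(G-D)<0$ gives full rank. The paper itself offers no proof---it cites the lemma from Huffman--Pless---and the textbook proof proceeds exactly along your lines, so there is nothing to contrast; your remark that the rank claim genuinely needs $\deg(G)<n$ (which the paper assumes throughout, since $m\le n-1$) is the only subtlety, and you handle it correctly.
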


Another code associated with divisors $D$ and $G$ is $C_{\Omega}(D,G)$, defined by:
\[
    C_{\Omega}(D,G):=\left\{(\omega_{P_1}(1),\ldots,\omega_{P_n}(1)) | \omega\in\Omega(G-D) \right\}
\]
where $\Omega(G-D)$ denotes the Weil differential space with the definition,
\[
	\Omega(G):=\{\omega\in \Omega_F(\mathcal{X}): (\omega)+G\ge 0\} \cup \{0\}.
\]
The divisor $(\omega)$ is called a canonical divisor. The code $C_{\Omega}(D,G)$ is also an AG code. The relationship between $C_L(D,G)$ and $C_{\Omega}(D,G)$ is described in the following (\cite[Proposition 2.2.10]{Stich}):
\begin{proposition}\label{dual}
Let $\eta$ be a Weil differential such that $v_{P_i}(\eta)=-1$ and $\eta_{P_i}(1)=1$ for $i=1,\ldots,n$. Then
\[
       C_L(D,G)^{\perp}=C_{\Omega}(D,G)=C_L(D,H)
\]
where
\[
	H:=D-G+(\eta).
\]
\end{proposition}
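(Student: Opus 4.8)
Both equalities are classical (the second is the standard duality theorem for AG codes and the first is its differential-to-function reformulation); here is how I would prove them. The plan is to derive everything from the residue theorem and the Riemann-Roch theorem, with the chosen $\eta$ serving to pass between differentials and functions. The heart of the matter is the identity $C_{\Omega}(D,G)=C_L(D,H)$ with $H=D-G+(\eta)$. Since the module of Weil differentials of $F(\mathcal{X})$ is one-dimensional over $F(\mathcal{X})$, every $\omega\in\Omega(G-D)$ can be written uniquely as $\omega=f\eta$ with $f\in F(\mathcal{X})$, and the defining condition $(\omega)\ge G-D$ then reads $(f)+(\eta)\ge G-D$, i.e. $(f)\ge-(D-G+(\eta))=-H$, which is precisely $f\in\mathcal{L}(H)$; hence $f\mapsto f\eta$ is a bijection $\mathcal{L}(H)\to\Omega(G-D)$. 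Moreover $v_{P_i}(H)=v_{P_i}(D)-v_{P_i}(G)+v_{P_i}((\eta))=1-0-1=0$, so the $P_i$ avoid $\mathrm{supp}(H)$ and $C_L(D,H)$ is well defined; and since $v_{P_i}(\omega)\ge v_{P_i}(G-D)=-1$ while $v_{P_i}(\eta)=-1$, the function $f=\omega/\eta$ is regular at each $P_i$, so $\omega_{P_i}(1)=(f\eta)_{P_i}(1)=f(P_i)\,\eta_{P_i}(1)=f(P_i)$. Thus the bijection is compatible with evaluation, giving $C_{\Omega}(D,G)=\{(f(P_1),\dots,f(P_n)):f\in\mathcal{L}(H)\}=C_L(D,H)$.

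For the equality $C_L(D,G)^{\perp}=C_{\Omega}(D,G)$ I would argue in two steps, as in the classical AG-duality proof. First, for $f\in\mathcal{L}(G)$ and $\omega\in\Omega(G-D)$ the differential $f\omega$ satisfies $(f\omega)=(f)+(\omega)\ge-G+(G-D)=-D$, hence is regular outside $\{P_1,\dots,P_n\}$ with at most a simple pole at each $P_i$ of residue $f(P_i)\,\omega_{P_i}(1)$; the residue theorem then gives $\sum_{i=1}^{n}f(P_i)\,\omega_{P_i}(1)=0$, i.e. $\mathrm{ev}_D(f)\cdot(\omega_{P_1}(1),\dots,\omega_{P_n}(1))=0$, so $C_{\Omega}(D,G)\subseteq C_L(D,G)^{\perp}$. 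Second, I would match dimensions using the identity $C_{\Omega}(D,G)=C_L(D,H)$ already obtained and $\dim C_L(D,A)=\ell(A)-\ell(A-D)$: since $(\eta)$ is a canonical divisor $K$ one has $H-D=K-G$ and $K-H=G-D$, so applying $\ell(A)-\ell(K-A)=\deg A+1-g$ to $A=G$ and to $A=H$ yields, after a short computation, $\dim C_L(D,H)+\dim C_L(D,G)=n$. Therefore $\dim C_{\Omega}(D,G)=n-\dim C_L(D,G)=\dim C_L(D,G)^{\perp}$, and with the inclusion above this forces equality.

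Finally, I would record why a Weil differential $\eta$ with $v_{P_i}(\eta)=-1$ and $\eta_{P_i}(1)=1$ for every $i$ exists, which I expect to be the only step requiring real care. The residue theorem forces $\sum_i\eta_{P_i}(1)=0$ for any Weil differential regular away from $D$, so one cannot in general take $\eta$ regular off $D$; the remedy is to allow one auxiliary simple pole at some place $Q\notin\mathrm{supp}(D)$. A short Riemann-Roch count then shows that the residue map from the space of Weil differentials regular outside $\{P_1,\dots,P_n,Q\}$ with at most simple poles there into $\mathbb{F}_q^{n+1}$, $\omega\mapsto(\omega_{P_1}(1),\dots,\omega_{P_n}(1),\omega_Q(1))$, has kernel the holomorphic differentials and image of dimension $n$, hence image equal to the hyperplane $\{c_1+\dots+c_n+c_Q=0\}$; choosing any preimage of $(1,\dots,1,-n)$ yields the desired $\eta$, a nonzero residue at $P_i$ forcing $v_{P_i}(\eta)=-1$. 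Everything else being the bookkeeping carried out above, it is this residue-map surjectivity that I would treat most carefully.
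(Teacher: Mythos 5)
Your argument is correct: the paper does not prove this statement but cites it as \cite[Proposition 2.2.10]{Stich}, and your proof is essentially the standard one from that source --- the translation $\omega=f\eta$, $f\in\mathcal{L}(H)$ giving $C_{\Omega}(D,G)=C_L(D,H)$, the residue-theorem inclusion $C_{\Omega}(D,G)\subseteq C_L(D,G)^{\perp}$, and the Riemann--Roch dimension count forcing equality. The only (harmless) divergence is your existence argument for $\eta$ via surjectivity of the residue map with an auxiliary pole, which is not needed since $\eta$ is hypothesized in the statement; where the paper later needs existence (Corollary 3.1) it simply invokes the Weak Approximation Theorem.
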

Furthermore, we present a useful isomorphism between function spaces and differential spaces.
\begin{lemma}\cite[Theorem 1.5.14]{Stich}\label{isomorphism}
Let $G$ be a divisor, and $W$ be a canonical divisor with $W=(\omega)$. Then the mapping
\begin{align*}
\mu:\begin{cases}
\mathcal{L}(W-G) \to \Omega(G),\\
x\to x\omega
\end{cases}
\end{align*}
is an isomorphism of $\mathbb{F}_q$-vector spaces.
\end{lemma}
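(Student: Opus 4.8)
My plan is to verify directly that $\mu$ is a well-defined, injective, and surjective $\mathbb{F}_q$-linear map, reducing everything to two structural facts about the space $\Omega_F(\mathcal{X})$ of Weil differentials that I would import from \cite{Stich}: first, that $\Omega_F(\mathcal{X})$ is a one-dimensional vector space over the function field $F(\mathcal{X})$, so that for any fixed nonzero $\omega$ the assignment $x\mapsto x\omega$ is an $F(\mathcal{X})$-linear bijection $F(\mathcal{X})\to\Omega_F(\mathcal{X})$; and second, the divisor identity $(x\omega)=(x)+(\omega)=(x)+W$ for $0\neq x\in F(\mathcal{X})$. Unwinding the definitions, membership $\omega'\in\Omega(G)$ amounts to $(\omega')\ge G$, and $x\in\mathcal{L}(W-G)$ to $(x)+(W-G)\ge 0$.

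The $\mathbb{F}_q$-linearity of $\mu$ is immediate, since the module action of $F(\mathcal{X})$ on $\Omega_F(\mathcal{X})$ is biadditive and the scalars $\mathbb{F}_q$ sit inside $F(\mathcal{X})$. For well-definedness I would check that $x\in\mathcal{L}(W-G)$ forces $x\omega\in\Omega(G)$: when $x\neq 0$, the inequality $(x)+(W-G)\ge 0$ rewrites, via the divisor identity, as $(x\omega)=(x)+W\ge G$, which is exactly the condition defining $\Omega(G)$; and $\mu(0)=0\in\Omega(G)$. Injectivity then follows at once from one-dimensionality: since $\omega\neq 0$ is an $F(\mathcal{X})$-basis of $\Omega_F(\mathcal{X})$, the equation $x\omega=0$ forces $x=0$, so $\ker\mu=\{0\}$.

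For surjectivity I would take an arbitrary $\omega'\in\Omega(G)$; if $\omega'=0$ it equals $\mu(0)$, and otherwise one-dimensionality produces a unique $0\neq x\in F(\mathcal{X})$ with $\omega'=x\omega$. It remains to verify $x\in\mathcal{L}(W-G)$, and this is the same computation run backwards: the divisor identity gives $(\omega')=(x)+W$, while $\omega'\in\Omega(G)$ gives $(\omega')\ge G$, so $(x)=(\omega')-W\ge G-W$, i.e.\ $(x)+(W-G)\ge 0$. Hence $\mu$ is a bijective $\mathbb{F}_q$-linear map, that is, an isomorphism of $\mathbb{F}_q$-vector spaces, with inverse $\omega'\mapsto\omega'/\omega$.

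I do not expect a genuine obstacle in the lemma itself: once the two structural inputs are granted, linearity, well-definedness, injectivity, and surjectivity are each one- or two-line arguments, and the well-definedness and surjectivity steps are literally the same divisor inequality read in opposite directions. The only substantive ingredient is the one-dimensionality of $\Omega_F(\mathcal{X})$ over $F(\mathcal{X})$, a foundational theorem of the theory of algebraic function fields (established via the adelic description of Weil differentials together with the Riemann--Roch theorem); a fully self-contained treatment would essentially reduce to reproving that, which I would instead cite.
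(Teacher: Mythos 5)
Your proposal is correct and is essentially the standard argument behind the cited result \cite[Theorem 1.5.14]{Stich}: the paper itself gives no proof but simply invokes Stichtenoth, whose proof likewise rests on the one-dimensionality of $\Omega_F(\mathcal{X})$ over $F(\mathcal{X})$ and the identity $(x\omega)=(x)+(\omega)$, exactly the two inputs you isolate. Nothing further is needed.
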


Let $\mathcal{X}$ be an elliptic curve and $P_1,\ldots,P_n,P_{\infty}$ be degree-one places of the function field $F(\mathcal{E})$. By the Singleton bound and the designed distance, the minimum distance of $C_L(D,mP_{\infty})$ lies in $[n-m,n-m+1]$. It is well known that if $d=n-m$, then $C_L(D,mP_{\infty})$ is NMDS; otherwise, it is MDS. Determining the minimum distance of $C_L(D,mP_{\infty})$ is equivalent to solving a subset sum problem, which is generally intractable. Very recently, Han and Ren showed the following result \cite{Han-conjecture}:
\begin{lemma}\label{maxecmds}
Let $C_L(D,mP_{\infty})$ be an MDS elliptic code, then
\[
	n\le\frac{\#\mathcal{E}(\mathbb{F}_q)}{2}+3.
\]
\end{lemma}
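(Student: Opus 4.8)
The plan is to translate the statement into a question about zero-sum subsets of the finite abelian group $\mathcal{E}(\mathbb{F}_q)$, and then treat the extreme values and the middle values of the dimension by separate means. First I would fix the group law via the Abel--Jacobi map $P\mapsto[P-P_\infty]$, identifying $\mathcal{E}(\mathbb{F}_q)$ with $\mathrm{Pic}^0(\mathcal{E})$ with identity $O=P_\infty$; then for distinct places $Q_1,\dots,Q_j\neq P_\infty$ one has $Q_1+\dots+Q_j\sim jP_\infty$ (as divisors) iff $Q_1\oplus\dots\oplus Q_j=O$ in this group. Writing $S=\{P_1,\dots,P_n\}\subseteq\mathcal{E}(\mathbb{F}_q)\setminus\{O\}$, $N=\#\mathcal{E}(\mathbb{F}_q)$, and letting $\Sigma_j(S)$ denote the set of $\oplus$-sums of $j$ distinct elements of $S$, the code $C_L(D,mP_\infty)$ has dimension $m$ and designed distance $n-m$, and it has a weight-$(n-m)$ codeword exactly when some $f\in\mathcal{L}(mP_\infty)$ has $m$ simple zeros among the $P_i$ and pole divisor $mP_\infty$, i.e.\ exactly when $O\in\Sigma_m(S)$. (For $m\le1$ or $m\ge n-1$ the code is the repetition code or its dual and is trivially MDS of unbounded length, so the genuine range is $2\le m\le n-2$.) Hence it suffices to prove: if $2\le m\le n-2$ and $O\notin\Sigma_m(S)$, then $n\le N/2+3$.

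Next I would dispose of the extreme values of $m$ by elementary counting; this is what pins down the constant $3$. For $m=2$, $O\notin\Sigma_2(S)$ means $S$ contains no pair $\{P,\ominus P\}$ with $P$ not $2$-torsion, so $S$ takes at most one point from each such pair, whence $n\le\tfrac12\bigl(N+\#\mathcal{E}[2](\mathbb{F}_q)\bigr)-1\le N/2+1$. For $m=3$, fix any $P_0\in S$: for all but the at most $2+\#\mathcal{E}[2](\mathbb{F}_q)\le6$ ``degenerate'' choices of $Q\in S$ (those for which $P_0$, $Q$, $\ominus(P_0\oplus Q)$ are not three distinct points), the point $\ominus(P_0\oplus Q)$ must lie outside $S$, since otherwise $\{P_0,Q,\ominus(P_0\oplus Q)\}$ is a zero-sum $3$-subset of $S$; as $Q\mapsto\ominus(P_0\oplus Q)$ is injective, $n-6\le N-n$, that is $n\le N/2+3$. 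Using the complementary reformulation $O\in\Sigma_m(S)\iff\sigma\in\Sigma_{n-m}(S)$ with $\sigma=\bigoplus_{P\in S}P$ (pass to complements of subsets), the same arguments applied to the target $\sigma$ settle $m\in\{n-2,n-3\}$ as well.

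For the remaining range $4\le m\le n-4$ I would argue by contradiction: assuming $n>N/2+3$, I would show $\Sigma_m(S)=\mathcal{E}(\mathbb{F}_q)$, contradicting $O\notin\Sigma_m(S)$. The Dias da Silva--Hamidoune/K\'arolyi inequality for restricted sumsets gives $\#\Sigma_m(S)\ge\min\bigl(p,\,m(n-m)+1\bigr)$, where $p$ is the least prime divisor of $N$, and since $m(n-m)\ge4(n-4)>N$ in this range, this already finishes the case when $p$ is large (e.g.\ $N$ prime). In general, if $\Sigma_m(S)\subsetneq\mathcal{E}(\mathbb{F}_q)$, the inverse (structural) form of that theorem yields a proper subgroup $K$ with $\Sigma_m(S)$ being $K$-periodic and $S$ concentrated in few cosets of $K$; but $n>N/2\ge\#K$ forces $S$ to meet at least two---indeed most---cosets of $K$, and reducing modulo $K$ and inducting on the index $[\mathcal{E}(\mathbb{F}_q):K]$ (the quotients are again finite abelian groups of rank $\le2$, so the subgroup lattice is manageable) produces the contradiction. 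An alternative route is to run the whole thing as a single minimal-counterexample induction on $j:=\min(m,n-m)$: for $j\ge4$, deleting one suitable element reduces an instance $h\notin\Sigma_j(S)$ to an instance $h'\notin\Sigma_{j-1}(S')$ with $|S'|=n-1$, which still exceeds $N/2+3$ unless the ``excess'' $n-N/2$ is very small, so that only the base cases $j\in\{2,3\}$ and a bounded residual excess remain.

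The step I expect to be the real obstacle is precisely this middle range $4\le m\le n-4$ when $N$ has a small prime factor --- the generic situation, since $N$ is usually even. There the Dias da Silva--Hamidoune bound degenerates and one must genuinely use the inverse theorem together with careful control of how $S$ distributes over the cosets of the stabilizing subgroup, and one must verify that the bound emerging from this case analysis never exceeds the sharp value $N/2+3$ coming from $m=3$. Secondary technical points are the integrality/parity bookkeeping around the threshold $N/2+3$, the exact role of $\#\mathcal{E}[2](\mathbb{F}_q)\in\{1,2,4\}$ in the endpoint estimates, and, for the minimal-counterexample route, the separate treatment of the small-excess residual case.
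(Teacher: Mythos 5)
You should know at the outset that the paper contains no proof of this lemma to compare against: it is quoted verbatim from Han and Ren \cite{Han-conjecture}, where it is the main theorem, so your attempt has to stand on its own. Your first two steps are fine. The translation of the MDS property into the statement that no $m$-subset of $S=\{P_1,\dots,P_n\}$ sums to $O$ in $\mathcal{E}(\mathbb{F}_q)$ is correct and standard, and your elementary counting for $m=2$ and $m=3$ (and, via the complementation $O\in\Sigma_m(S)\iff\sigma\in\Sigma_{n-m}(S)$, for $m=n-2,\,n-3$) is sound and does produce exactly the constant $3$. One small inaccuracy: the case $m=n-1$ is not ``trivially MDS''; it is MDS precisely when $\sigma\notin S$, and such codes can have length close to $\#\mathcal{E}(\mathbb{F}_q)$, which is why the nontrivial range $2\le m\le n-2$ has to be assumed (the paper's standing convention $m\le n-1$ makes this worth flagging, but it concerns the statement rather than your argument).

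The genuine gap is the middle range $4\le m\le n-4$, and you concede it yourself. The Dias da Silva--Hamidoune bound $\#\Sigma_m(S)\ge\min\bigl(p,\,m(n-m)+1\bigr)$ only closes the case where the least prime divisor $p$ of $N=\#\mathcal{E}(\mathbb{F}_q)$ exceeds $N$'s size constraints, i.e.\ essentially the case of prime $N$; generically $N$ is even, $p=2$, and the bound says nothing. For that generic case you invoke an ``inverse (structural) form'' giving a stabilizing subgroup $K$ with $\Sigma_m(S)$ $K$-periodic and $S$ concentrated in few cosets, followed by an induction on $[\mathcal{E}(\mathbb{F}_q):K]$; but no off-the-shelf inverse theorem for restricted $m$-fold sumsets over abelian groups of rank $\le2$ delivers that structure, you do not specify how $S$ distributes over the cosets, and you do not show that the constant survives the descent to quotients (it must never exceed the sharp value $N/2+3$ coming from $m=3$). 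The alternative minimal-counterexample induction on $\min(m,n-m)$ has the same unresolved residual case. This middle range is precisely the content of Han--Ren's paper, which settled what had been an open conjecture; as written, your proposal is a correct reduction plus solved boundary cases together with a plausible but unexecuted plan for the core of the theorem, not a proof.
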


\subsection{Covering Radii of Linear Codes}
Let $\mathcal{C}$ be an $[n,k,d]$ linear code over $\mathbb{F}_q$. For any ${\bf v}\in\mathbb{F}_q^n$, define
\[
	d({\bf v},\mathcal{C}):=\min\{d({\bf v},{\bf c})|{\bf c}\in \mathcal{C}\}.
\]
The covering radius is then defined as follows.

\begin{definition}\cite{Huffman-fecc}\label{defcd}
The covering radius $\rho(\mathcal{C})$ of $\mathcal{C}$ is the smallest integer $\rho$ such that every vector in $\mathbb{F}_q^n$ has distance at most $\rho$ from some codeword in $\mathcal{C}$. Equivalently,
\[
	\rho(\mathcal{C}):=\max\limits_{{\bf v}\in\mathbb{F}_q^n}d({\bf v},\mathcal{C}).
\]
The vectors achieving $\rho(\mathcal{C})$ are called deep holes of $\mathcal{C}$.
\end{definition}
We recall two lemmas on covering radii that will be used subsequently.
\begin{lemma}\cite[Redundancy Bound]{Huffman-fecc}
For an $[n,k]$ code $\mathcal{C}$, $\rho(C)\le n-k$.
\end{lemma}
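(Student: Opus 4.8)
The plan is to exploit the parity-check (syndrome) description of $\mathcal{C}$. Fix a parity-check matrix $H\in\mathbb{F}_q^{(n-k)\times n}$ of $\mathcal{C}$, so that ${\rm rank}(H)=n-k$ and $\mathcal{C}=\{{\bf c}\in\mathbb{F}_q^n: H{\bf c}^{\mathsf{T}}={\bf 0}\}$. First I would observe that, since the columns of $H$ span all of $\mathbb{F}_q^{n-k}$, for every target syndrome ${\bf s}\in\mathbb{F}_q^{n-k}$ one can select at most $n-k$ columns of $H$ whose span contains ${\bf s}$; equivalently, there is a vector ${\bf e}\in\mathbb{F}_q^n$ with ${\rm wt}({\bf e})\le n-k$ and $H{\bf e}^{\mathsf{T}}={\bf s}$.

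Next, given an arbitrary ${\bf v}\in\mathbb{F}_q^n$, I would apply this with ${\bf s}=H{\bf v}^{\mathsf{T}}$ to obtain such an ${\bf e}$. Then ${\bf v}-{\bf e}\in\mathcal{C}$, because $H({\bf v}-{\bf e})^{\mathsf{T}}={\bf 0}$, whence
\[
	d({\bf v},\mathcal{C})\le d({\bf v},{\bf v}-{\bf e})={\rm wt}({\bf e})\le n-k.
\]
Taking the maximum over ${\bf v}\in\mathbb{F}_q^n$ and invoking Definition~\ref{defcd} would then yield $\rho(\mathcal{C})\le n-k$. An equivalent route, if one prefers generator matrices, is to use an information set: there exist $k$ coordinates on which the coordinate projection $\mathcal{C}\to\mathbb{F}_q^k$ is a bijection (take $k$ linearly independent columns of a generator matrix). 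For any ${\bf v}$, the unique codeword agreeing with ${\bf v}$ on those $k$ coordinates differs from ${\bf v}$ in at most $n-k$ positions, giving the same bound.

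There is no substantive obstacle here: the only step requiring (entirely routine) justification is the claim that every vector of $\mathbb{F}_q^{n-k}$ is a linear combination of at most $n-k$ columns of a rank-$(n-k)$ matrix — equivalently, the existence of an information set of size $k$ — which is immediate from elementary linear algebra.
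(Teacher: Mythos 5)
Your proof is correct: since a parity-check matrix $H$ has rank $n-k$, some $n-k$ of its columns form a basis of $\mathbb{F}_q^{n-k}$, so every syndrome $H{\bf v}^{\mathsf{T}}$ is realized by an error vector ${\bf e}$ of weight at most $n-k$, and ${\bf v}-{\bf e}\in\mathcal{C}$ gives $d({\bf v},\mathcal{C})\le n-k$. The paper cites this Redundancy Bound from the literature without proof, and both of your arguments (the syndrome argument and the equivalent information-set argument) are the standard textbook justifications, so there is nothing to add.
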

Suppose that $n_q(k,d):=\min\{n; \text{there is an $[n, k, d]\ q$-ary code}\}$. We have:
\begin{lemma}\cite[Corollary 8.1 ]{Orozco-Janwa-CR}\label{oplength}
For an $[n_q(k,d),k,d]$ code $\mathcal{C}$, $\rho(\mathcal{C})\le d-\frac{d}{q^k}$.
\end{lemma}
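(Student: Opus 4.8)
The plan is to split the argument into an easy range, handled by a supercode/shortening trick, and the remaining range, handled by a residual-code decomposition of the covering radius.

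First I would record the coarse bound $\rho(\mathcal C)\le d-1$ for any length-optimal $[n_q(k,d),k,d]$ code. If $\rho(\mathcal C)\ge d$, pick a deep hole $\mathbf v$. Each nonzero codeword of $\mathcal C$ has weight $\ge d$, and each vector of a nonzero coset $\alpha\mathbf v+\mathcal C$ (with $\alpha\in\mathbb F_q\setminus\{0\}$) has weight $\ge d(\mathbf v,\mathcal C)=\rho(\mathcal C)\ge d$; hence $\mathcal C':=\langle\mathcal C,\mathbf v\rangle$ is an $[n,k+1,\ge d]$ code. Shortening $\mathcal C'$ at a coordinate on which it is not identically zero gives an $[n-1,k,\ge d]$ code, contradicting $n=n_q(k,d)$. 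Since $d-1\le d-d/q^{k}$ whenever $d\le q^{k}$, the statement is already proved in that case, so I may assume $d>q^{k}$.

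Next I would peel off a minimum-weight codeword. Fix $\mathbf c_0\in\mathcal C$ with $\mathrm{wt}(\mathbf c_0)=d$, put $T:=\mathrm{supp}(\mathbf c_0)$ (so $|T|=d$), and let $\mathcal C^{T}:=\mathcal C|_{T^{c}}$ be the puncturing onto the complement of $T$. Since every nonzero codeword has weight $\ge d=|T|$, the only codewords supported inside $T$ are the scalar multiples of $\mathbf c_0$, so $\dim\mathcal C^{T}=k-1$; the standard residual argument gives minimum distance $\ge\lceil d/q\rceil$, and length-optimality of $\mathcal C$ forbids zero coordinates, hence $\mathcal C^{T}$ has none either. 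The key inequality I would prove is
\[
\rho(\mathcal C)\le \bigl(d-\lceil d/q\rceil\bigr)+\rho(\mathcal C^{T}).
\]
Given $\mathbf v\in\mathbb F_q^{n}$, choose $\mathbf c_1\in\mathcal C$ with $\mathrm{wt}\bigl((\mathbf v-\mathbf c_1)|_{T^{c}}\bigr)\le\rho(\mathcal C^{T})$, possible because $\mathcal C^{T}$ covers $\mathbb F_q^{T^{c}}$. Passing from $\mathbf c_1$ to $\mathbf c_1+a\mathbf c_0$ does not change the restriction to $T^{c}$ since $\mathbf c_0|_{T^{c}}=0$, while on $T$ the $i$-th entry of $\mathbf v-\mathbf c_1-a\mathbf c_0$ vanishes exactly when $(\mathbf v-\mathbf c_1)_i/(\mathbf c_0)_i=a$, which makes sense as $(\mathbf c_0)_i\neq0$ on $T$; a pigeonhole over the $d$ coordinates of $T$ and the $q$ scalars $a$ yields an $a$ for which at least $\lceil d/q\rceil$ of these entries vanish, so $\mathrm{wt}\bigl((\mathbf v-\mathbf c_1-a\mathbf c_0)|_{T}\bigr)\le d-\lceil d/q\rceil$. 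Adding the two parts places the codeword $\mathbf c_1+a\mathbf c_0$ within distance $(d-\lceil d/q\rceil)+\rho(\mathcal C^{T})$ of $\mathbf v$.

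Finally I would iterate. The code $\mathcal C^{T}$ again has no zero coordinate and positive minimum distance, so the same reduction applies; after $k$ steps one reaches the zero-dimensional residual code, whose covering radius equals its length. Tracking parameters — at level $i$ the minimum distance $d_i$ satisfies $d_i\ge\lceil d/q^{i}\rceil$, hence $\lceil d_i/q\rceil\ge\lceil d/q^{i+1}\rceil$ — one arrives at
\[
\rho(\mathcal C)\le n-\sum_{i=1}^{k}\Bigl\lceil \tfrac{d}{q^{i}}\Bigr\rceil .
\]
Combining with $n=n_q(k,d)$ and the Griesmer bound $n_q(k,d)\ge\sum_{i=0}^{k-1}\lceil d/q^{i}\rceil$, and using the telescoping $\sum_{i=0}^{k-1}\lceil d/q^{i}\rceil-\sum_{i=1}^{k}\lceil d/q^{i}\rceil=d-\lceil d/q^{k}\rceil$, gives $\rho(\mathcal C)\le d-\lceil d/q^{k}\rceil\le d-d/q^{k}$. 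I expect the main obstacle to be precisely this last step: the telescoped estimate closes only when $n_q(k,d)$ meets (or nearly meets) the Griesmer bound, so one must either work within that situation or appeal to known Griesmer-attainment results in the range $d>q^{k}$; the supercode step and the residual/pigeonhole decomposition are otherwise routine.
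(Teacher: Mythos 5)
The paper does not prove this lemma at all — it is quoted verbatim from \cite[Corollary 8.1]{Orozco-Janwa-CR} — so there is no internal proof to compare against; I can only assess your argument on its own terms, and it has a genuine gap at the closing step. Your first reduction (if $\rho\ge d$, the supercode $\langle\mathcal C,\mathbf v\rangle$ is an $[n,k+1,\ge d]$ code, and shortening contradicts $n=n_q(k,d)$, so $\rho\le d-1$, which settles $d\le q^k$) is fine, and your residual/pigeonhole recursion is also correct: peeling a minimum-weight codeword and optimizing over the scalar $a$ does give $\rho(\mathcal C)\le (d-\lceil d/q\rceil)+\rho(\mathcal C^{T})$, and iterating yields the valid general bound $\rho(\mathcal C)\le n-\sum_{i=1}^{k}\lceil d/q^{i}\rceil$, which can be rewritten as $\rho(\mathcal C)\le \bigl(n-g_q(k,d)\bigr)+d-\lceil d/q^{k}\rceil$ with $g_q(k,d)=\sum_{i=0}^{k-1}\lceil d/q^{i}\rceil$ the Griesmer function.

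The problem is the direction of the last inequality. Length-optimality gives $n=n_q(k,d)\ge g_q(k,d)$, i.e.\ the excess $n-g_q(k,d)$ is \emph{nonnegative}, so your telescoped bound is weaker than the target unless the excess is zero; the argument closes exactly when $\mathcal C$ meets the Griesmer bound, not merely when it is length-optimal, and these are different hypotheses since $n_q(k,d)>g_q(k,d)$ happens for infinitely many parameter sets. Your proposed fallback — ``appeal to known Griesmer-attainment results in the range $d>q^{k}$'' — is not available as stated: the classical attainment theorems (Baumert--McEliece type, and the Belov-type constructions) guarantee $n_q(k,d)=g_q(k,d)$ only for $d$ beyond a threshold depending on $k$ and $q$, of rough order $(k-2)q^{k-1}$, which exceeds $q^{k}$ once $k$ is large compared with $q$; so the case $d>q^{k}$ of the lemma is left unproved by your plan. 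To recover the full statement one needs an argument that exploits $n=n_q(k,d)$ directly (as the cited reference does), for instance by combining the deep-hole supercode with residual-code length bounds of the form $n_q(k+1,\delta)\ge \delta+n_q(k,\lceil\delta/q\rceil)$ to contradict optimality, rather than passing through the Griesmer value of $n$ itself.
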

\section{The Dual of Extended Algebraic Geometry Codes}
In this section, we investigate extended AG codes and Roth-Lempel type AG codes. We present the generator matrices for their dual codes and analyze minimum distances. Firstly, we fix the following notations for the remainder of this paper:
\begin{itemize}
	\item $\mathcal{X}$ denotes an algebraic curve over $\mathbb{F}_q$ with genus $g$.
	\item $F=F(\mathcal{X})$ represents the algebraic function field of $\mathcal{X}$.
	\item $n$ and $m$ are integers satisfying	 $2g\le m\le n-1$.
	\item $D=P_1+\cdots+P_n$ and $mP_{\infty}$ are effective divisors, with $P_i,P_{\infty}$ being degree-one places of $F$.
	\item $k=m-g+1$, where $\{f_1,\ldots,f_{k}\}$ forms a basis for $\mathcal{L}(mP_{\infty})$.
	\item $C_L(D,mP_{\infty})$ denotes the AG code associated with $D$ and $mP_{\infty}$, having generator matrix $G_k$ and parity check matrix $H_k$.
\end{itemize}

\subsection{Codes with Length \texorpdfstring{$n+1$}{}}
The extended code $C_{ex}(D,mP_{\infty})$ is generated by the matrix
\begin{equation*}
	G_{k,ex}:=\begin{pmatrix}
		G_k & \infty^T\\
	\end{pmatrix},
\end{equation*}
where
\[
	\infty=(0,0,\ldots,0,1).
\]
The main results will be given as follows.
\begin{theorem}
The code $C_{ex}(D,mP_{\infty})$ forms an $[n+1,k]$-linear code over $\mathbb{F}_q$. For any
\[
	\omega^{*}\in\Omega_F((m-1)P_{\infty}-D)\setminus\Omega_F(mP_{\infty}-D),
\]
the matrix
\begin{equation*}
	H_{k,ex}:=\begin{pmatrix}
		H_k & {\bf 0}\\
		{\bf w} & \omega^*_{P_{\infty}}(f_{k})\\
	\end{pmatrix}
\end{equation*}
serves as a parity check matrix for $C_{ex}(D,mP_{\infty})$, where
\[
	{\bf w}=(\omega^{*}_{P_1}(1),\omega^{*}_{P_2}(1),\ldots,\omega^{*}_{P_n}(1)).
\] 
\end{theorem}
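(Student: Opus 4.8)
The plan is to establish the two assertions of the theorem in turn. For the parameters: since $\deg(mP_\infty)=m<n$, the map $\mathrm{ev}_D$ embeds $\mathcal{L}(mP_\infty)$ into $\mathbb{F}_q^n$, so $G_k$ has rank $k=\ell(mP_\infty)=m-g+1$; adjoining the column $\infty^{T}$ changes neither the number of rows nor the rank, hence $C_{ex}(D,mP_\infty)$ is an $[n+1,k]$ code and $\dim C_{ex}(D,mP_\infty)^{\perp}=n+1-k$. It then suffices to prove (i) every row of $H_{k,ex}$ lies in $C_{ex}(D,mP_\infty)^{\perp}$, that is, $G_{k,ex}H_{k,ex}^{T}=0$, and (ii) $\mathrm{rank}(H_{k,ex})=n+1-k$; together these force the row space of $H_{k,ex}$ to equal $C_{ex}(D,mP_\infty)^{\perp}$.

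For (i) I would compute $G_{k,ex}H_{k,ex}^{T}$ one row of $H_{k,ex}$ at a time. The rows inherited from $(H_k\ \mathbf{0})$ pair to $0$ with every row of $G_{k,ex}$ because $H_kG_k^{T}=0$ and their final entry is $0$. For the bottom row $(\mathbf{w},\omega^{*}_{P_\infty}(f_k))$ the key tool is the residue theorem, which in the form I need says that for any $h\in F$ the Weil differential $\omega^{*}$ satisfies $\sum_{P}\omega^{*}_{P}(h)=0$ (vanishing of $\omega^{*}$ on the principal adele of $h$). Apply this with $h=f_j$: the only poles of $f_j\omega^{*}$ lie among $P_1,\dots,P_n,P_\infty$, and $\omega^{*}\in\Omega_F((m-1)P_\infty-D)$ gives $v_{P_i}(\omega^{*})\ge -1$, whence $\omega^{*}_{P_i}(f_j)=f_j(P_i)\,\omega^{*}_{P_i}(1)$. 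Taking the basis adapted to the pole orders at $P_\infty$, so that $f_1,\dots,f_{k-1}\in\mathcal{L}((m-1)P_\infty)$ and $v_{P_\infty}(f_k)=-m$: for $j<k$ the differential $f_j\omega^{*}$ is regular at $P_\infty$ (since $v_{P_\infty}(f_j)\ge -(m-1)$ and $v_{P_\infty}(\omega^{*})\ge m-1$), so $\omega^{*}_{P_\infty}(f_j)=0$ and the identity collapses to $\sum_i f_j(P_i)\omega^{*}_{P_i}(1)=0$, exactly the inner product of the bottom row with the $j$-th row of $G_{k,ex}$; for $j=k$ it reads $\sum_i f_k(P_i)\omega^{*}_{P_i}(1)+\omega^{*}_{P_\infty}(f_k)=0$, exactly the inner product with the $k$-th row (whose last entry is $1$).

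For (ii), $(H_k\ \mathbf{0})$ has rank $n-k$ and every vector in its row space has last coordinate $0$, so it is enough to verify $\omega^{*}_{P_\infty}(f_k)\ne 0$. This is where the hypothesis $\omega^{*}\in\Omega_F((m-1)P_\infty-D)\setminus\Omega_F(mP_\infty-D)$ is used: it forces $v_{P_\infty}(\omega^{*})=m-1$ exactly, so $v_{P_\infty}(f_k\omega^{*})=-m+(m-1)=-1$, and a differential with a simple pole has nonzero residue there, so $\omega^{*}_{P_\infty}(f_k)=\mathrm{res}_{P_\infty}(f_k\omega^{*})\ne 0$. I would also note that such $\omega^{*}$ actually exists: by Lemma~\ref{isomorphism} and Riemann--Roch, $\dim\Omega_F((m-1)P_\infty-D)-\dim\Omega_F(mP_\infty-D)=\ell(W-(m-1)P_\infty+D)-\ell(W-mP_\infty+D)=1$ for a canonical divisor $W$, both divisors having degree at least $2g-1$ thanks to $m\le n-1$.

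The step I expect to be the main obstacle is the bookkeeping with the local components of $\omega^{*}$: reading off the correct valuation conditions at $P_1,\dots,P_n$ and $P_\infty$ from membership in $\Omega_F((m-1)P_\infty-D)\setminus\Omega_F(mP_\infty-D)$, carefully justifying the local identities $\omega^{*}_{P_i}(f_j)=f_j(P_i)\,\omega^{*}_{P_i}(1)$ and $\omega^{*}_{P_\infty}(h)=\mathrm{res}_{P_\infty}(h\omega^{*})$, and fixing the convention that the basis is chosen with $f_k$ the essentially unique element of pole order exactly $m$ at $P_\infty$ (so that the newly adjoined column of $H_{k,ex}$ interacts only with $f_k$); without this last point the bottom row of $H_{k,ex}$ would fail to be orthogonal to the rows of $G_{k,ex}$ corresponding to $f_1,\dots,f_{k-1}$. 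Everything else reduces to the residue theorem and elementary linear algebra.
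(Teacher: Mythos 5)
Your proof is correct and takes essentially the same route as the paper: orthogonality of the added row of $H_{k,ex}$ to $G_{k,ex}$ via the vanishing of the Weil differential $\omega^*$ on principal adeles (the residue theorem), the local identity $\omega^{*}_{P_i}(f_j)=f_j(P_i)\,\omega^{*}_{P_i}(1)$, and existence of $\omega^{*}$ from the dimension count $\dim\Omega_F((m-1)P_{\infty}-D)-\dim\Omega_F(mP_{\infty}-D)=1$. You are in fact somewhat more careful than the paper, which leaves implicit both the adapted-basis convention ($f_1,\dots,f_{k-1}\in\mathcal{L}((m-1)P_{\infty})$ and $v_{P_{\infty}}(f_k)=-m$, which is indeed needed for the bottom row to be orthogonal to the first $k-1$ rows) and the full-rank verification of $H_{k,ex}$, which you settle by noting $\omega^{*}_{P_{\infty}}(f_k)=\mathrm{res}_{P_{\infty}}(f_k\omega^{*})\neq 0$ since $v_{P_{\infty}}(f_k\omega^{*})=-1$.
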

\begin{proof}
The length and dimension of $C_{ex}(D,mP_{\infty})$ follow directly from the condition $2g\le m\le n-1$. We observe that
\[
	\dim_{\mathbb{F}_q}(\Omega_F((m-1)P_{\infty}-D))=n-m+g
\]
and
\[
	\dim_{\mathbb{F}_q}(\Omega_F(mP_{\infty}-D))=n-m+g-1.
\]
Hence
\[
	\Omega_F((m-1)P_{\infty}-D)\setminus\Omega_F(mP_{\infty}-D)\neq\emptyset.
\]
Given that $G_k$ and $H_k$ are the generator and parity check matrices of $C_L(D,mP_{\infty})$, respectively, it follows that
\begin{equation*}
	\begin{pmatrix}
		H_k & \mathbf{0}
	\end{pmatrix}\cdot
	\begin{pmatrix}
			G_k & \infty^T
	\end{pmatrix}^T=
	\begin{pmatrix}
			0
	\end{pmatrix}.
\end{equation*}
To complete the proof, it remains to show that
\begin{equation*}
	\begin{pmatrix}
		{\bf w} & \omega^*_{P_{\infty}}(f_{k})
	\end{pmatrix}\cdot
	\begin{pmatrix}
		G_k & \infty^T
	\end{pmatrix}^T=
	\begin{pmatrix}
		0
	\end{pmatrix}.
\end{equation*}
For $\omega^{*}\in\Omega_F((m-1)P_{\infty}-D)\setminus\Omega_F(mP_{\infty}-D)$, we have $v_{P_{\infty}}(\omega^*)=m-1$ and $v_{P_i}(\omega^*)=-1$. Applying the fact that Weil differentials vanish on $F$, for $1\le j\le k-1$, we obtain
\begin{align*}
	0&=\omega^*(f_j)\\
	&=\sum\limits_{i=1}^n\omega^*_{P_i}(f_j)\\
	&=\sum\limits_{i=1}^nf_j(P_i)\omega^*_{P_i}(1)\\
	&=\begin{pmatrix}
			{\bf w} & \omega^*_{P_{\infty}}(f_{k})
		\end{pmatrix}\cdot\begin{pmatrix}
					f_j(P_1)&\cdots&f_j(P_n) & 0
				\end{pmatrix}^T.
\end{align*}
For $f_{k}$, we derive
\begin{align*}
	0&=\omega^*(f_{k})\\
	&=\sum\limits_{i=1}^n\omega^*_{P_i}(f_{k})+\omega^*_{P_{\infty}}(f_{k})\\
	&=\sum\limits_{i=1}^nf_{k}(P_i)\omega^*_{P_i}(1)+\omega^*_{P_{\infty}}(f_{k})\\
	&=\begin{pmatrix}
			{\bf w} & \omega^*_{P_{\infty}}(f_{k})
		\end{pmatrix}\cdot\begin{pmatrix}
					f_{k}(P_1)&\cdots&f_{k}(P_n) & 1
				\end{pmatrix}^T.
\end{align*}
By Lemma \ref{generator}, the proof is completed.
\end{proof}
\begin{corollary}\label{dualex}
Let $\eta$ be a Weil differential satisfying $v_{P_i}(\eta)=-1$ and $\eta_{P_i}(1)=1$ for $1\le i\le n$. Then there exists a basis $\{g_1,\ldots,g_{n-k+1}\}$ of $\mathcal{L}((\eta)+D-(m-1)P_{\infty})$ such that the matrix
\begin{equation*}
	G'_{ex}:=\begin{pmatrix}
		g_1(P_1)& g_1(P_2)&\cdots &g_1(P_n)&0\\
		g_2(P_1)& g_2(P_2)&\cdots &g_2(P_n)&0\\
		\vdots&\vdots&\ddots&\vdots&\vdots\\
		g_{n-k+1}(P_1)& g_{n-k+1}(P_2)&\cdots &g_{n-k+1}(P_n)&-\lambda\\
	\end{pmatrix}
\end{equation*}
constitutes a generator matrix for $C_{ex}(D,mP_{\infty})^{\perp}$, where $\lambda=\sum_{i=1}^{n}f_{k}(P_i)g_{n-k+1}(P_i)$.
\end{corollary}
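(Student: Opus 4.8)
The plan is to recognize $G'_{ex}$ as a purely function-theoretic rewriting of the parity check matrix $H_{k,ex}$ produced in the theorem above, obtained by passing from Weil differentials to functions via Lemma~\ref{isomorphism}. Fix $\eta$ as in the statement and take $W=(\eta)$ as the canonical divisor. By Lemma~\ref{isomorphism}, the assignment $x\mapsto x\eta$ gives $\mathbb{F}_q$-vector space isomorphisms
\[
\mathcal{L}\bigl((\eta)+D-mP_\infty\bigr)\ \cong\ \Omega_F(mP_\infty-D)
\quad\text{and}\quad
\mathcal{L}\bigl((\eta)+D-(m-1)P_\infty\bigr)\ \cong\ \Omega_F((m-1)P_\infty-D).
\]
By the Riemann--Roch theorem and the hypothesis $2g\le m\le n-1$, the function spaces on the left have dimensions $n-k$ and $n-k+1$, and the first is a codimension-one subspace of the second since $(\eta)+D-mP_\infty\le (\eta)+D-(m-1)P_\infty$. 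So I would choose a basis $\{g_1,\dots,g_{n-k}\}$ of $\mathcal{L}((\eta)+D-mP_\infty)$ and extend it to a basis $\{g_1,\dots,g_{n-k+1}\}$ of $\mathcal{L}((\eta)+D-(m-1)P_\infty)$; reading off the defining divisor inequalities shows $v_{P_i}(g_j)\ge 0$ for all $i,j$, so each evaluation $g_j(P_i)$ is defined.

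With these bases fixed, the argument proceeds as follows. By Proposition~\ref{dual}, $C_L(D,mP_\infty)^\perp=C_L(D,(\eta)+D-mP_\infty)$, so by Lemma~\ref{generator} we may take $H_k$ to be the matrix with rows $(g_j(P_1),\dots,g_j(P_n))$ for $1\le j\le n-k$ (this is harmless, since replacing $H_k$ by a row-equivalent matrix only left-multiplies $H_{k,ex}$ by an invertible matrix). Set $\omega^{*}:=g_{n-k+1}\eta$. As $g_{n-k+1}\notin\mathcal{L}((\eta)+D-mP_\infty)$, the isomorphisms above place $\omega^{*}$ in $\Omega_F((m-1)P_\infty-D)\setminus\Omega_F(mP_\infty-D)$, which is precisely the input the theorem allows; hence the resulting $H_{k,ex}$ is a parity check matrix of $C_{ex}(D,mP_\infty)$. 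It remains to identify $H_{k,ex}$ with $G'_{ex}$. Because $v_{P_i}(\eta)=-1$, $\eta_{P_i}(1)=1$ and $g_{n-k+1}$ is regular at $P_i$, the local-component identity used in the theorem's proof gives $\omega^{*}_{P_i}(1)=g_{n-k+1}(P_i)$, whence ${\bf w}=(g_{n-k+1}(P_1),\dots,g_{n-k+1}(P_n))$. Furthermore, the relation $0=\omega^{*}(f_k)=\sum_{i=1}^{n}f_k(P_i)\omega^{*}_{P_i}(1)+\omega^{*}_{P_\infty}(f_k)$ obtained in that proof yields $\omega^{*}_{P_\infty}(f_k)=-\sum_{i=1}^{n}f_k(P_i)g_{n-k+1}(P_i)=-\lambda$. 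Thus $H_{k,ex}=G'_{ex}$, and a parity check matrix of $C_{ex}(D,mP_\infty)$ is by definition a generator matrix of $C_{ex}(D,mP_\infty)^\perp$; the row count $n-k+1=\dim C_{ex}(D,mP_\infty)^\perp$ confirms there is nothing more to check.

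The only step that is not pure formalism is the pole bookkeeping behind $\omega^{*}_{P_i}(1)=g_{n-k+1}(P_i)$ and $\omega^{*}_{P_\infty}(f_k)=-\lambda$: one must extract from the Riemann--Roch memberships of $g_{n-k+1}$ and $f_k$, together with $v_{P_i}(\eta)=-1$, that $g_{n-k+1}\eta$ has a simple pole at each $P_i$ with residue $g_{n-k+1}(P_i)$, and that $f_k g_{n-k+1}\eta$ has poles (all simple) only among $P_1,\dots,P_n,P_\infty$, so that the local formulas and the identity borrowed from the theorem apply verbatim. This is routine once the divisor inequalities are written out, and it essentially duplicates a computation already carried out in the theorem's proof. (An alternative, slightly longer route would avoid the theorem altogether and verify $G'_{ex}G_{k,ex}^{T}=0$ row by row, together with a rank count; but this reproduces the same residue computation.)
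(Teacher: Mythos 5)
Your proposal is correct and follows essentially the same route as the paper's proof: both use Lemma~\ref{isomorphism} with the canonical divisor $(\eta)$ to map $\mathcal{L}((\eta)+D-(m-1)P_{\infty})$ onto $\Omega_F((m-1)P_{\infty}-D)$, choose $g_{n-k+1}$ outside $\mathcal{L}((\eta)+D-mP_{\infty})$ so that $\omega^{*}=g_{n-k+1}\eta$ is an admissible input to the preceding theorem, and use $\mu(f')_{P_i}(1)=f'(P_i)$ to read off ${\bf w}$ and $\omega^{*}_{P_{\infty}}(f_k)=-\lambda$. Your extra bookkeeping (dimension counts, identifying $H_k$ with the evaluations of $g_1,\dots,g_{n-k}$ via Proposition~\ref{dual}, and the explicit residue identity for $-\lambda$) merely makes explicit steps the paper leaves implicit.
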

\begin{proof}
The existence of $\eta$ follows from the Weak Approximation Theorem. By Lemma \ref{isomorphism}, there exists an isomorphism
\[
	\mu:\mathcal{L}((\eta)+D-(m-1)P_{\infty})\to\Omega_F((m-1)P_{\infty}-D)
\]defined via $\mu(f'):=f'\eta$. Let $\{g_1,\ldots,g_{n-k+1}\}$ be a basis of $\mathcal{L}((\eta)+D-(m-1)P_{\infty})$ satisfying
\[
	g_{n-k+1}\in\mathcal{L}((\eta)+D-(m-1)P_{\infty})\setminus\mathcal{L}((\eta)+D-mP_{\infty}).
\]
This implies
\[
	\mu(g_{n-k+1})\in\Omega_F((m-1)P_{\infty}-D)\setminus\Omega_F(mP_{\infty}-D).
\]
For any $f'\in\mathcal{L}((\eta)+D-(m-1)P_{\infty})$, the values $f'(P_i)$ are well-defined since $v_{P_i}(\eta)=-1$. Consequently,
\[
	\mu(f')_{P_i}(1)=f'(P_i)\eta_{P_i}(1)=f'(P_i),
\]
and therefore
\[
	\mu(g_{n-k+1})_{P_{\infty}}(f_{k})=-\sum_{i=1}^{n}f_{k}(P_i)g_{n-k+1}(P_i)=-\lambda.
\]
This demonstrates that $G'_{ex}$ is a parity check matrix for $C_{ex}(D,mP_{\infty})$, therefore the proof is completed.
\end{proof}
We have established a generalization of Proposition \ref{dual} for extended AG codes. The following theorem and corollary address the minimum distance of these extended AG codes.
\begin{theorem}\label{mdex}
If $C_L(D,mP_{\infty})$ has minimum distance $d=n-m$, then $C_{ex}(D,mP_{\infty})$ satisfies $d_{ex}=n-m+1$. Furthermore, let $d'$ and $d'_{ex}$ denote the minimum distances of $C_L(D,mP_{\infty})^{\perp}$ and $C_{ex}(D,mP_{\infty})^{\perp}$, respectively. If $d'=m-(2g-2)$, then $d'_{ex}=d'$.
\end{theorem}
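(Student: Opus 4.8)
The plan is to prove each of the two claims by squeezing the minimum distance in question between the asserted value as both an upper and a lower bound, using only weight counting together with the designed-distance bound $d\ge d^{*}$ for $C_L$- and $C_\Omega$-codes.

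\textbf{The first claim.} Since $d=n-m$, fix a codeword $\mathrm{ev}_D(f)$ of $C_L(D,mP_\infty)$ of weight exactly $n-m$, with $f\in\mathcal{L}(mP_\infty)\setminus\{0\}$ vanishing at precisely $m$ of the places $P_i$; then $\deg (f)_0\ge m$, and since $\deg (f)_\infty\le m$ this forces $(f)_\infty=mP_\infty$, so $f\notin\mathcal{L}((m-1)P_\infty)$. Taking the basis of $\mathcal{L}(mP_\infty)$ normalised so that $\{f_1,\dots,f_{k-1}\}$ spans $\mathcal{L}((m-1)P_\infty)$ (the normalisation already in force for the parity-check theorem above), the $f_k$-coefficient of $f$ is nonzero, so appending it to $\mathrm{ev}_D(f)$ produces a codeword of $C_{ex}(D,mP_\infty)$ of weight $n-m+1$, whence $d_{ex}\le n-m+1$. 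For the reverse inequality write an arbitrary nonzero codeword of $C_{ex}(D,mP_\infty)$ as $(\mathbf{a}G_k,a_k)$: if $a_k=0$ then $\mathbf{a}G_k$ evaluates a nonzero element of $\mathcal{L}((m-1)P_\infty)$, which has at most $m-1$ zeros among $P_1,\dots,P_n$, so the weight is already $\ge n-m+1$; if $a_k\ne 0$ then $\mathbf{a}G_k$ is a nonzero codeword of $C_L(D,mP_\infty)$ of weight $\ge n-m$ (nonzero because $\mathrm{ev}_D$ is injective on $\mathcal{L}(mP_\infty)$ since $m<n$), and the nonzero last entry contributes one more. Hence $d_{ex}=n-m+1$.

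\textbf{The second claim.} Here I would work with the generator matrix $G'_{ex}$ of $C_{ex}(D,mP_\infty)^{\perp}$ supplied by Corollary~\ref{dualex}: its rows evaluate a basis $\{g_1,\dots,g_{n-k+1}\}$ of $\mathcal{L}(H+P_\infty)$, where $H=(\eta)+D-mP_\infty$, taken so that $\{g_1,\dots,g_{n-k}\}$ spans $\mathcal{L}(H)$ and $g_{n-k+1}\notin\mathcal{L}(H)$, and its last column is $(0,\dots,0,-\lambda)^{T}$ with $\lambda=\sum_i f_k(P_i)g_{n-k+1}(P_i)$. The key preliminary fact is $\lambda\ne 0$: since $\mu(g_{n-k+1})=g_{n-k+1}\eta$ has valuation exactly $m-1$ at $P_\infty$ while $f_k$ has a pole of order exactly $m$ there, the differential $f_kg_{n-k+1}\eta$ has a simple pole at $P_\infty$, so $\lambda=-\omega^{*}_{P_\infty}(f_k)$ is the nonzero residue of that simple pole. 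Granting this, the bound $d'_{ex}\le d'$ follows by taking a minimum-weight codeword $\mathrm{ev}_D(g_0)$ of $C_L(D,mP_\infty)^{\perp}=C_L(D,H)$, with $g_0\in\mathcal{L}(H)=\mathrm{span}(g_1,\dots,g_{n-k})$, and appending a $0$. For $d'_{ex}\ge d'$ write a nonzero codeword of $C_{ex}(D,mP_\infty)^{\perp}$ as $(\mathrm{ev}_D(g),-b_{n-k+1}\lambda)$ with $g=\sum_i b_ig_i\ne 0$: if $b_{n-k+1}=0$ then $g\in\mathcal{L}(H)\setminus\{0\}$ and $\mathrm{ev}_D(g)$ is a nonzero codeword of $C_L(D,mP_\infty)^{\perp}$, of weight $\ge d'=m-(2g-2)$; if $b_{n-k+1}\ne 0$ then the last coordinate is nonzero by $\lambda\ne 0$, while $\mathrm{ev}_D(g)$ is a nonzero codeword of $C_L(D,H+P_\infty)=C_\Omega(D,(m-1)P_\infty)$, whose designed distance equals $(m-1)-(2g-2)=d'-1$ by Proposition~\ref{dual} applied to $G=(m-1)P_\infty$ with the same $\eta$; adding the nonzero last coordinate brings the total weight up to at least $d'$. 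Hence $d'_{ex}=d'$.

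The only step with genuine algebraic-geometry content is the non-vanishing of $\lambda$; everything else is weight accounting. I would state it as a short lemma and prove it by the residue theorem applied to $f_kg_{n-k+1}\eta$ together with the pole-order bookkeeping at $P_\infty$ (valuation $-m$ for $f_k$, valuation $m-1$ for $g_{n-k+1}\eta$, hence a simple pole with nonzero residue) --- essentially the computation already carried out inside the proof of Corollary~\ref{dualex}. The remaining points needing a word of care are the normalisations of the bases $\{f_j\}$ and $\{g_i\}$ and the injectivity of the relevant evaluation maps, all of which follow from the standing hypotheses $2g\le m\le n-1$ exactly as in the preceding results.
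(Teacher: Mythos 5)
Your argument is correct, and for the first claim it is essentially the paper's own proof: the same case split on whether the coefficient $a_k$ of $f_k$ vanishes, plus the existence of a minimum-weight codeword of $C_L(D,mP_\infty)$ to realise weight $n-m+1$. For the dual code the overall squeeze is also the same (append a zero to a minimum-weight codeword of $C_L(D,mP_\infty)^{\perp}$ for the upper bound, then a designed-distance count for the lower bound), but your route differs in a useful way: the paper works directly with the differential representation and disposes of the lower bound in one line, asserting $n+1-\deg((\omega)+D-(m-1)P_{\infty})\le d'_{ex}$ without spelling out why the extended coordinate rescues the deficit of $1$ coming from $C_\Omega(D,(m-1)P_\infty)$ having designed distance only $d'-1$. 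You instead pass through the generator matrix $G'_{ex}$ of Corollary~\ref{dualex}, split on the coefficient of $g_{n-k+1}$, and isolate the genuinely geometric point as a lemma: $\lambda=-\mathrm{res}_{P_\infty}(f_kg_{n-k+1}\eta)\neq0$ because $v_{P_\infty}(g_{n-k+1}\eta)=m-1$ and $v_{P_\infty}(f_k)=-m$ force a simple pole, so the last coordinate is nonzero exactly when the first $n$ coordinates might only reach weight $d'-1$. This is precisely the justification the paper's terse inequality needs (equivalently, that $\omega^{*}_{P_\infty}(f_k)\neq0$ for $\omega^{*}\in\Omega_F((m-1)P_\infty-D)\setminus\Omega_F(mP_\infty-D)$), so your write-up is somewhat more complete than the published argument; the paper's version is shorter because it stays at the level of designed distances and leaves the non-vanishing of the extra entry implicit. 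Your normalisation remarks (ordering the bases so that $f_1,\dots,f_{k-1}$ span $\mathcal{L}((m-1)P_\infty)$ and $g_1,\dots,g_{n-k}$ span $\mathcal{L}((\eta)+D-mP_\infty)$, injectivity of $\mathrm{ev}_D$ from $2g\le m\le n-1$) match the standing conventions of the paper and are all that is needed.
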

\begin{proof}
For any $f\in\mathcal{L}(mP_{\infty})$ expressed as $f=\sum_{i=1}^{k}a_if_i$, the associated codeword in $C_{ex}(D,mP_{\infty})$ is
\[
	(f(P_1),\ldots,f(P_n),a_{k}).
\]
If $a_{k}=0$, then $f\in\mathcal{L}((m-1)P_{\infty})$. In this case, $f$ has at most $m-1$ zeros in $\{P_1,\ldots,P_n\}$, resulting in a codeword weight of at least $n+1-m$. Conversely, if $f$ has $m$ zeros in $\{P_1,\ldots,P_n\}$, then $a_{k}\neq0$ and the codeword weight equals $n+1-m$. This establishes
\[
	n-m+1\le d_{ex}\le n-m+1.
\]
Given $d'=m-(2g-2)$, there exists a differential $\omega\in\Omega_F(mP_{\infty}-D)$ whose associated codeword
\[
	(\omega_{P_1}(1),\ldots,\omega_{P_n}(1))
\]
has weight $m-(2g-2)$. Consequently, note that $\Omega_F(mP_{\infty}-D)\subset \Omega_F((m-1)P_{\infty}-D)$, we obtain the vector
\[
	(\omega_{P_1}(1),\ldots,\omega_{P_n}(1),0)
\]
also has weight $m-(2g-2)$. Therefore we have
\[
	n+1-\deg((\omega)+D-(m-1)P_{\infty})\le d'_{ex}\le m-(2g-2),
\]
implying $d'_{ex}=m-(2g-2)=d'$. 
\end{proof}
Two immediate corollaries follow from Theorem \ref{mdex} and its proof.
\begin{corollary}
If $C_L(D,mP_{\infty})$ is a $g$-MDS code of length $n$, then $C_{ex}(D,mP_{\infty})$ constitutes a $g$-MDS code of length $n+1$.
\end{corollary}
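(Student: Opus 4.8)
The plan is to unwind the definition of a $g$-MDS code into two minimum-distance equalities and then invoke Theorem~\ref{mdex}. Since $\dim C_L(D,mP_\infty)=k=m-g+1$, the hypothesis $s(C_L(D,mP_\infty))=g$ reads
\[
g=n+1-k-d=n+1-(m-g+1)-d=n-m+g-d,
\]
so $d=n-m$. Writing $d'$ for the minimum distance of the $[n,\,n-k]$ code $C_L(D,mP_\infty)^\perp$, the hypothesis $s(C_L(D,mP_\infty)^\perp)=g$ reads
\[
g=n+1-(n-k)-d'=k+1-d'=(m-g+1)+1-d',
\]
so $d'=m-2g+2=m-(2g-2)$. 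These are exactly the two hypotheses appearing in Theorem~\ref{mdex}.

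Next I would apply Theorem~\ref{mdex}: from $d=n-m$ it yields $d_{ex}=n-m+1$, and from $d'=m-(2g-2)$ it yields $d'_{ex}=d'=m-(2g-2)$. It then remains only to recompute the two Singleton defects for the extended code. By the structure theorem proved at the start of this subsection, $C_{ex}(D,mP_\infty)$ is an $[n+1,k]$ code, hence
\[
s(C_{ex}(D,mP_\infty))=(n+1)+1-k-d_{ex}=(n+2)-(m-g+1)-(n-m+1)=g;
\]
its dual is an $[n+1,\,n+1-k]=[n+1,\,n-m+g]$ code with minimum distance $d'_{ex}=m-2g+2$, so
\[
s(C_{ex}(D,mP_\infty)^\perp)=(n+1)+1-(n-m+g)-(m-2g+2)=g.
\]
Since both Singleton defects equal $g$, the code $C_{ex}(D,mP_\infty)$ is $g$-MDS of length $n+1$, as claimed.

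This is pure bookkeeping, so I do not expect a genuine obstacle; the only places to be careful are using the correct dual dimension $n+1-k$ (rather than $n-k$) for the extended code, checking that extending does not lower the dimension below $k$ (which the structure theorem guarantees), and noting that the standing assumption $2g\le m\le n-1$ needed by Theorem~\ref{mdex} is already in force.
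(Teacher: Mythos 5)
Your proposal is correct and follows essentially the same route as the paper's proof: translate the $g$-MDS hypothesis into $d=n-m$ and $d'=m-(2g-2)$ using $k=m-g+1$, invoke Theorem~\ref{mdex}, and check the Singleton defects of the $[n+1,k]$ extended code and its dual. Your version just spells out the defect bookkeeping that the paper leaves implicit.
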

\begin{proof}
Since $m\ge 2g$ in the settings, we have that the dimension of $C_L(D,mP_{\infty})$ is exactly $m-g+1$, therefore it has minimum distance $d=n-m$ since it is $g$-MDS. Similarly, one can obtain that the minimum distance of $C_L(D,mP_{\infty})^{\perp}$ is $d'=m-(2g-2)$. Then from Theorem~\ref{mdex}, $C_{ex}(D,mP_{\infty})$ is a $g$-MDS.
\end{proof}
\begin{corollary}
The minimum distance $d_{ex}$ of $C_{ex}(D,mP_{\infty})$ satisfies $d_{ex}\ge n-m+1$.
\end{corollary}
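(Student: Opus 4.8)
The plan is to reuse, almost verbatim, the case analysis already carried out in the proof of Theorem~\ref{mdex}. Every nonzero codeword of $C_{ex}(D,mP_{\infty})$ is obtained from a nonzero function $f=\sum_{i=1}^{k}a_if_i\in\mathcal{L}(mP_{\infty})$ and has the form $(f(P_1),\ldots,f(P_n),a_{k})$; hence it suffices to split according to whether the appended coordinate $a_{k}$ vanishes and, in each case, to bound the number of zeros of $f$ among $P_1,\ldots,P_n$.

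First I would dispose of the case $a_{k}=0$. Here I use that the fixed basis $\{f_1,\ldots,f_{k}\}$ may be taken flag-compatible, so that $\{f_1,\ldots,f_{k-1}\}$ is a basis of $\mathcal{L}((m-1)P_{\infty})$; this is legitimate because $m-1\ge 2g-1$ gives $\ell((m-1)P_{\infty})=m-g=k-1$. Then $f\in\mathcal{L}((m-1)P_{\infty})\setminus\{0\}$, so $f$ vanishes at no more than $\deg((m-1)P_{\infty})=m-1$ of the places $P_i$, and the codeword already has at least $n-(m-1)=n-m+1$ nonzero entries among its first $n$ coordinates.

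Next I would treat $a_{k}\neq0$. In this case $f\in\mathcal{L}(mP_{\infty})$ has at most $m$ zeros among $P_1,\ldots,P_n$, so at least $n-m$ of the first $n$ coordinates are nonzero; adding the nonzero last coordinate $a_{k}$ gives weight at least $n-m+1$ once more. Combining the two cases yields $d_{ex}\ge n-m+1$. There is essentially no obstacle: the statement is immediate bookkeeping once Theorem~\ref{mdex} and its proof are in hand, and the only mild point of care is arranging the flag-compatible basis of $\mathcal{L}(mP_{\infty})$, which is forced by the Riemann--Roch dimension count under the standing hypothesis $2g\le m\le n-1$.
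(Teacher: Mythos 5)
Your proposal is correct and follows essentially the same route as the paper: the paper's own proof simply appeals to the case analysis in the proof of Theorem~\ref{mdex} (split on whether $a_{k}=0$, bound the zeros of $f$ among $P_1,\ldots,P_n$ by $m-1$ or $m$, and invoke the designed distance), which is exactly what you replay. The only difference is cosmetic: you make explicit the flag-compatible choice of basis (justified by Riemann--Roch since $m-1\ge 2g-1$), which the paper leaves implicit in its standing notation.
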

\begin{proof}
From the proof of Theorem~\ref{mdex}, we have that
\[
	d_{ex}\ge \min\{d\mid \text{$d$ is the minimum distance of an AG code } C_L(D,mP_{\infty})\}+1.
\]
By the definition of the design distance, we obtain $d_{ex}\ge d^*+1=n-m+1$.
\end{proof}
We now examine the MDS property for extended AG codes.
\begin{proposition}
If both $C_L(D,(m-1)P_{\infty})$ and $C_L(D,mP_{\infty})$ are MDS codes, then $C_{ex}(D,mP_{\infty})$ is MDS.
\end{proposition}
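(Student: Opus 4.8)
The plan is to show $d_{ex}\ge n-m+g+1$ and then close the gap with the Singleton bound. Since $C_{ex}(D,mP_{\infty})$ is an $[n+1,k]$ code with $k=m-g+1$, the Singleton bound gives $d_{ex}\le (n+1)-k+1=n-m+g+1$, so the content is entirely in the lower bound.

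First I would fix the basis $\{f_1,\dots,f_k\}$ of $\mathcal{L}(mP_{\infty})$ underlying $G_k$ so that $\{f_1,\dots,f_{k-1}\}$ is a basis of $\mathcal{L}((m-1)P_{\infty})$ and $f_k$ has a pole of order exactly $m$ at $P_{\infty}$; this is legitimate because $2g\le m\le n-1$ forces $\ell((m-1)P_{\infty})=m-g=k-1$ and makes the evaluation map injective on both Riemann--Roch spaces, so $\mathcal{L}((m-1)P_{\infty})$ is a hyperplane of $\mathcal{L}(mP_{\infty})$ and the extended code does not depend on which such basis is used. With this normalization a nonzero codeword of $C_{ex}(D,mP_{\infty})$ is ${\bf c}=(f(P_1),\dots,f(P_n),a_k)$ with $f=\sum_{i=1}^k a_if_i\neq 0$, and $a_k=0$ precisely when $f\in\mathcal{L}((m-1)P_{\infty})$.

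Next I split on $a_k$. If $a_k=0$, then $(f(P_1),\dots,f(P_n))$ is a nonzero codeword of the MDS code $C_L(D,(m-1)P_{\infty})$, which has parameters $[n,k-1,n-m+g+1]$, so ${\bf c}$ already has at least $n-m+g+1$ nonzero entries among its first $n$ coordinates. If $a_k\neq 0$, then $(f(P_1),\dots,f(P_n))$ is a nonzero codeword of the MDS code $C_L(D,mP_{\infty})$ with parameters $[n,k,n-m+g]$, giving at least $n-m+g$ nonzero entries among the first $n$ coordinates, to which the nonzero last coordinate $a_k$ adds one more. In either case ${\rm wt}({\bf c})\ge n-m+g+1$, hence $d_{ex}\ge n-m+g+1$, and together with the Singleton bound $d_{ex}=(n+1)-k+1$, i.e. $C_{ex}(D,mP_{\infty})$ is MDS.

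There is no genuine obstacle here; the argument only uses that the two MDS hypotheses translate, via the evaluation map, into sharp bounds on how many of the $P_i$ a nonzero function in $\mathcal{L}((m-1)P_{\infty})$, respectively $\mathcal{L}(mP_{\infty})$, can vanish at. The one point deserving a line of care is the basis normalization in the second step, which is what makes ``$a_k=0$'' equivalent to ``$f\in\mathcal{L}((m-1)P_{\infty})$''; alternatively one can avoid it by working with the generator matrix directly, checking that any $k$ columns of $G_{k,ex}=(G_k\mid\infty^T)$ are linearly independent, using the MDS property of $G_k$ when the last column is excluded and the MDS property of a generator matrix of $C_L(D,(m-1)P_{\infty})$ on the top $k-1$ rows when it is included.
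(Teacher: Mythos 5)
Your proof is correct and follows essentially the same route as the paper's: split codewords according to whether the last coordinate $a_k$ vanishes, bound the number of zeros among $P_1,\dots,P_n$ via the MDS property of $C_L(D,(m-1)P_{\infty})$ in the first case and of $C_L(D,mP_{\infty})$ in the second, and close with the Singleton bound. The only difference is that you spell out the basis normalization ($f_1,\dots,f_{k-1}$ spanning $\mathcal{L}((m-1)P_{\infty})$, $v_{P_\infty}(f_k)=-m$) and the injectivity of the evaluation map, which the paper leaves implicit.
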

\begin{proof}
For $f\in\mathcal{L}(mP_{\infty})$, the associated codeword in $C_{ex}(D,mP_{\infty})$ is
\[
	(f(P_1),\ldots,f(P_n),a_{k}).
\]
The hypothesis and condition $m\ge 2g$ imply that $C_L(D,(m-1)P_{\infty})$ is an $[n,k-1,n-k+2]$ code. 
\begin{itemize}
	\item For $f\in\mathcal{L}((m-1)P_{\infty})$, there exist at most $k-2$ zeros in $\{f(P_1),\ldots,f(P_n)\}$, yielding a codeword with weight at least $n+1-(k-2)-1=n+1-k+1$. 
	\item For $f\in\mathcal{L}(mP_{\infty})$, the corresponding codeword has weight at least $n+1-(k-1)=n+1-k+1$. 
\end{itemize}
Thus $C_{ex}(D,mP_{\infty})$ achieves the minimum distance $n+1-k+1$, which means that it is an $[n+1,k,n+1-k+1]$ MDS code.
\end{proof}
The following computational example is obtained via MAGMA.
\begin{example}
Let $q=19$ and $\mathcal{X}$ be the elliptic curve $\mathcal{E}:y^2=x^3-x+4$. Note that $\#\mathcal{E}(\mathbb{F}_{19})=23$, which is prime. Fix $P_1:=(0:2:1)$ and define $P_i:=[i]P_1$ for $i=1,\ldots,6$. For any $1\le m\le 5$, the code $C_L(D,mP_{\infty})$ is MDS. Consequently, $C_{ex}(D,mP_{\infty})$ is MDS for $2\le m\le 5$.
\end{example}

We now determine the $\lambda$ when $\mathcal{X}$ is in some special cases, thus the code $C_{ex}(D,mP_{\infty})^{\perp}$ will be given explicitly. Let $t$ be an integer satisfying $n=t\cdot[F:\mathbb{F}_q(x)]$. Suppose that $\{P_1,\ldots,P_n\}$ lie on $\{Q_1,\ldots,Q_t\}$, where each $Q_i$ is a place of $\mathbb{F}_q(x)$ and splits completely in $F/\mathbb{F}_q(x)$, with prime element $x-\alpha_i$. Define $h=\prod_{i=1}^{t}(x-\alpha_i)$, then we have $v_{Q_i}(h)=1$ for $i=1,\ldots,t$. By \cite[Proposition 8.1.2]{Stich}, the differential $dh/h$ of $\mathbb{F}_q(x)$ satisfies $v_{Q_i}(dh/h)=-1$ and $dh/h_{Q_i}(1)=Res_{Q_i}(dh/h)=1$. Denote $\eta=\text{Cotr}_{F/\mathbb{F}_q(x)}(dh/h)$, then we have
\begin{align*}
	(\eta)^F&=\text{Con}_{F/\mathbb{F}_q(x)}(dh/h)+\text{\rm Diff}(F/\mathbb{F}_q(x))\\
	&=(h')^F-\text{\rm Con}_{F/\mathbb{F}_q(x)}(Q_1+\cdots+Q_t+(h)_{\infty})-2[F:\mathbb{F}_q(x)]P_{\infty}\\
	&=(h')^F-D+nP_{\infty}+\text{\rm Diff}(F/\mathbb{F}_q(x))-2[F:\mathbb{F}_q(x)]P_{\infty}\\
	&=(h')^F-D+\text{\rm Diff}(F/\mathbb{F}_q(x))+(n-2[F:\mathbb{F}_q(x)])P_{\infty}
\end{align*}
where $dh=h'dx$. Moreover, we obtain $v_{P_i}(\eta)=-1$ and $\eta_{P_i}(1)=Res_{P_i}(\eta)=1$. Consequently, the $\lambda$ can be determined by calculating $\text{\rm Diff}(F/\mathbb{F}_q(x))$.
\begin{example}
Let $\mathcal{X}$ be a projective line, i.e., $g=0$ with $F=\mathbb{F}_q(x)$. Then $C_L(D,G)$ is exactly an RS code. A basis for $\mathcal{L}(mP_{\infty})$ is $\{1,x,\ldots,x^{m}\}$ with dimension $k=m+1$. We observe that
\[
	\mathcal{L}((\eta)+D-(m-1)P_{\infty})=\mathcal{L}((h')+(n-m-1)P_{\infty}),
\]
which has a basis:
\[
	\left\{\frac{1}{h'},\frac{x}{h'},\ldots,\frac{x^{n-m-1}}{h'}\right\}.
\]
Noting that
\[
	\sum\limits_{i=1}^n\frac{\alpha_i^{n-1}}{h'(\alpha_i)}=1,
\]
we obtain that the generator matrix of $C_L(D,mP_{\infty})^{\perp}$ is given by
\begin{equation*}
	G'_{ex}:=\begin{pmatrix}
		\frac{1}{h'(\alpha_1)}& \frac{1}{h'(\alpha_2)}&\cdots &\frac{1}{h'(\alpha_n)}&0\\
		\frac{\alpha_1}{h'(\alpha_1)}& \frac{\alpha_2}{h'(\alpha_2)}&\cdots &\frac{\alpha_n}{h'(\alpha_n)}&0\\
		\vdots&\vdots&\ddots&\vdots&\vdots\\
		\frac{\alpha_1^{n-m-1}}{h'(\alpha_1)}& \frac{\alpha_2^{n-m-1}}{h'(\alpha_2)}&\cdots &\frac{\alpha_n^{n-m-1}}{h'(\alpha_n)}&-1\\
	\end{pmatrix}.
\end{equation*}
Since $C_L(D,G)$ has minimum distance $d=n-m$, it follows that $C_{ex}(D,G)$ is also MDS. This is just the result in \cite[Theorem 5.3.4]{Huffman-fecc}.
\end{example}
\begin{corollary}
Let $q$ be odd and $\mathcal{X}$ be an elliptic curve, i.e., $g=1$ and $k=m$. Suppose that $n$ is even and the point set $\{P_1,\ldots,P_n\}$, where $P_i=(\alpha_i:\beta_i:1)$ and $P_{i+1}=(\alpha_i:-\beta_i:1)$ with $\beta_i\neq0$ for $i=1,3,\ldots,n-1$. Assume that $\{g_1,\ldots,g_{n-k+1}\}$ is a basis of $\mathcal{L}((n-k+1)P_{\infty})$. Then the matrix
\begin{equation*}
	G'_{ex}:=\begin{pmatrix}
		\frac{g_1}{yh'}(P_1)& \frac{g_1}{yh'}(P_2)&\cdots &\frac{g_1}{yh'}(P_n)&0\\
		\frac{g_2}{yh'}(P_1)& \frac{g_2}{yh'}(P_2)&\cdots &\frac{g_2}{yh'}(P_n)&0\\
		\vdots&\vdots&\ddots&\vdots&\vdots\\
		\frac{g_{n-k+1}}{yh'}(P_1)& \frac{g_{n-k+1}}{yh'}(P_2)&\cdots &\frac{g_{n-k+1}}{yh'}(P_n)&-2\\
	\end{pmatrix}.
\end{equation*}
constitutes a generator matrix for $C_L(D,mP_{\infty})^{\perp}$. Furthermore, if $C_L(D,mP_{\infty})$ is NMDS, then $C_{ex}(D,mP_{\infty})$ is also NMDS.
\end{corollary}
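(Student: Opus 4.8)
The plan is to specialise Corollary \ref{dualex} to the elliptic situation, so that the whole task reduces to identifying the Riemann--Roch space $\mathcal{L}((\eta)+D-(m-1)P_\infty)$ and evaluating the corner scalar $\lambda$ there. Set $h:=\prod_{j=1}^{N}(x-\alpha_{2j-1})$ with $N:=n/2$, a separable polynomial since the $\alpha_{2j-1}$ are distinct. Because $q$ is odd and $\beta_i\neq0$, the points $P_i=(\alpha_i:\beta_i:1)$ and $P_{i+1}=(\alpha_i:-\beta_i:1)$ are distinct and both lie over the place $x=\alpha_i$ of $\mathbb{F}_q(x)$, which therefore splits completely in $F/\mathbb{F}_q(x)$; thus $n=2N=t\cdot[F:\mathbb{F}_q(x)]$ with $t=N$, and the construction preceding this corollary applies verbatim: $\eta=\mathrm{Cotr}_{F/\mathbb{F}_q(x)}(dh/h)$ satisfies $v_{P_i}(\eta)=-1$, $\eta_{P_i}(1)=1$, and
\[
(\eta)^F=(h')^F-D+\mathrm{Diff}(F/\mathbb{F}_q(x))+(n-4)P_\infty .
\]

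Next I would compute the different. As $q$ is odd we may take $F=\mathbb{F}_q(x,y)$ with $y^2=f(x)$, $f$ the separable Weierstrass cubic; this quadratic extension is tamely ramified, the ramified places being the zeros of $y$ (a divisor of degree $3$) together with $P_\infty$, each with different exponent $1$, so $\mathrm{Diff}(F/\mathbb{F}_q(x))=(y)_0+P_\infty=(y)+4P_\infty$, of degree $4$ as Riemann--Hurwitz demands. Substituting collapses the divisor to $(\eta)^F=(h')^F+(y)-D+nP_\infty$, hence
\[
(\eta)+D-(m-1)P_\infty=(h')^F+(y)+(n-m+1)P_\infty .
\]
Since $h'(\alpha_i)\neq0$ and $y(P_i)=\pm\beta_i\neq0$, multiplication by $1/(yh')$ is an $\mathbb{F}_q$-isomorphism $\mathcal{L}((n-m+1)P_\infty)\to\mathcal{L}((\eta)+D-(m-1)P_\infty)$, and $n-m+1=n-k+1$ because $g=1$. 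Choosing the basis $\{g_1,\dots,g_{n-k+1}\}$ of $\mathcal{L}((n-k+1)P_\infty)$ so that $g_1,\dots,g_{n-k}$ span $\mathcal{L}((n-k)P_\infty)$ and $g_{n-k+1}$ is the monomial of pole order exactly $n-k+1$ at $P_\infty$, the functions $g_j/(yh')$ are precisely the basis required in Corollary \ref{dualex}; this already shows that $G'_{ex}$, with corner entry $-\lambda$ where $\lambda=\sum_{i=1}^n f_k(P_i)(g_{n-k+1}/(yh'))(P_i)$, generates $C_{ex}(D,mP_\infty)^\perp$.

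It remains to evaluate $\lambda$, which is the crux. Taking $f_k$ to be the monomial of pole order $m$ in $\mathcal{L}(mP_\infty)$ and using that $m$ and $n-m+1$ have opposite parity, exactly one of $f_k,g_{n-k+1}$ carries a factor $y$; over the fibre $\{P_i,P_{i+1}\}$ that factor cancels against $y(P_i)=-y(P_{i+1})$, and a short exponent count gives $f_k(P_i)g_{n-k+1}(P_i)/y(P_i)=\alpha_i^{(n-2)/2}$ at both points of the pair. Therefore
\[
\lambda=\sum_{i=1}^{n}\frac{\alpha_i^{(n-2)/2}}{h'(\alpha_i)}=2\sum_{j=1}^{N}\frac{\alpha_{2j-1}^{N-1}}{h'(\alpha_{2j-1})}=2 ,
\]
the last step being the Lagrange identity $\sum_{j}\alpha_{2j-1}^{N-1}/\!\prod_{l\neq j}(\alpha_{2j-1}-\alpha_{2l-1})=1$, and the factor $2$ being the degree of the cover $F/\mathbb{F}_q(x)$. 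Finally the ``furthermore'' is immediate from Theorem \ref{mdex} with $g=1$: an NMDS code $C_L(D,mP_\infty)$ has $d=n-m$ and dual minimum distance $d'=m=m-(2g-2)$, so the theorem gives $d_{ex}=n-m+1$ and $d'_{ex}=m$, i.e.\ $C_{ex}(D,mP_\infty)$ is again NMDS. I expect the main obstacle to be the second paragraph: pinning down $\mathrm{Diff}(F/\mathbb{F}_q(x))$ and all the $P_\infty$-multiplicities precisely enough that $(\eta)^F$ collapses to $(h')^F+(y)-D+nP_\infty$, and then matching the monomial $x$-degrees so that the exponent in the Lagrange sum is exactly $N-1$; the rest is routine bookkeeping.
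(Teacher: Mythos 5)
Your proposal is correct and follows essentially the same route as the paper: specialize Corollary \ref{dualex}, identify $(\eta)+D-(m-1)P_{\infty}$ with $(yh')+(n-k+1)P_{\infty}$, pick monomial representatives of $f_k$ and $g_{n-k+1}$, evaluate $\lambda=2$ via the Lagrange identity with the factor $2$ from the two points in each fibre, and conclude NMDS from Theorem \ref{mdex}. The only cosmetic differences are that you compute $(\eta)$ through $\mathrm{Diff}(F/\mathbb{F}_q(x))=(y)_0+P_{\infty}$ (tameness) where the paper uses $(dx)_F=(y)_F$ directly, and your parity argument merges the paper's two cases ($k$ odd/even) into one.
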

\begin{proof}
Since $\beta_i\neq0$ for $i=1,3,\ldots,n-1$, we have
\[
	(x-\alpha_i)_F=P_i+P_{i+1}-2P_{\infty},
\]
which implies
\[
	(h)_F=D-nP_{\infty}.
\]
Noting that $(dx)_F=(y)_F$ holds for elliptic curves, we consequently derive
\[
	(\eta)_F+D-(k-1)P_{\infty}=(yh')_F+(n-k+1)P_{\infty}.
\]
Furthermore, we can choose $f_k$ and $g_{n-k+1}$ such that
\begin{align*}
f_k=\begin{cases}
x^{\frac{k-3}{2}}y\ &k\ {\rm odd},\\
x^{\frac{k}{2}}\ &k\ {\rm even},
\end{cases}\ \text{and}\ 
g_{n-k+1}=\begin{cases}
x^{\frac{n-k+1}{2}}\ &k\ {\rm odd},\\
x^{\frac{n-k-2}{2}}y\ &k\ {\rm even}.
\end{cases}
\end{align*}
Therefore, we obtain
\begin{align*}
\lambda&=\sum\limits_{i=1}^n\frac{f_kg_{n-k+1}}{yh'}(P_i)\\
&=\sum\limits_{i=1}^n\frac{x^{\frac{n-2}{2}}y}{yh'}(P_i)\\
&=2\sum\limits_{i=1}^t\frac{x^{\frac{n-2}{2}}}{h'}(P_{2i-1})\\
&=2\sum\limits_{i=1}^t\frac{\alpha_i^{t-1}}{h'(\alpha_i)}\\
&=2.
\end{align*}
From Corollary \ref{dualex}, we derive the generator matrix of $C_L(D,mP_{\infty})^{\perp}$. If $C_L(D,mP_{\infty})$ is NMDS, the conditions $d=n-k$ and $d'=k$ directly imply the result by Theorem \ref{mdex}.
\end{proof}
\begin{corollary}
Let $q$ be an odd prime and $\mathcal{H}$ be the Hermitian curve defined over $\mathbb{F}_{q^2}$ with $g=\frac{q(q-1)}{2}$. Suppose that $n=tq$ with $\{P_1,\ldots,P_n\}$ satisfying
\[
	P_{(i-1)q+j}=(\alpha_i:\beta_{i,j}:1)
\]
for $1\le i\le t$ and $0\le j\le q-1$. There exists a basis $\{g_1,\ldots,g_{n-k+1}\}$ of $\mathcal{L}((n-m+2g-1)P_{\infty})$ such that the matrix
\begin{equation*}
	G'_{ex}:=\begin{pmatrix}
		\frac{g_1}{h'}(P_1)& \frac{g_1}{h'}(P_2)&\cdots &\frac{g_1}{h'}(P_n)&0\\
		\frac{g_2}{h'}(P_1)& \frac{g_2}{h'}(P_2)&\cdots &\frac{g_2}{h'}(P_n)&0\\
		\vdots&\vdots&\ddots&\vdots&\vdots\\
		\frac{g_{n-k+1}}{h'}(P_1)& \frac{g_{n-k+1}}{h'}(P_2)&\cdots &\frac{g_{n-k+1}}{h'}(P_n)&-1\\
	\end{pmatrix}
\end{equation*}
constitutes a generator matrix for $C_L(D,mP_{\infty})^{\perp}$.
\end{corollary}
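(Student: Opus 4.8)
The plan is to specialize the general construction preceding this corollary to $\mathcal{X}=\mathcal{H}\colon y^{q}+y=x^{q+1}$ over $\mathbb{F}_{q^{2}}$, where $[F:\mathbb{F}_{q^{2}}(x)]=q$ and the place $\infty$ of $\mathbb{F}_{q^{2}}(x)$ is totally ramified with unique extension $P_{\infty}$; then $n=tq$ matches the hypothesis. First I would verify that each place $Q_{i}$ with prime element $x-\alpha_{i}$ splits completely in $F$: since $\alpha_{i}\in\mathbb{F}_{q^{2}}$ we have $\alpha_{i}^{q+1}=\alpha_{i}\cdot\alpha_{i}^{q}\in\mathbb{F}_{q}$, and $y\mapsto y^{q}+y$ is the surjective $\mathbb{F}_{q}$-linear trace map $\mathbb{F}_{q^{2}}\to\mathbb{F}_{q}$ with kernel of size $q$, so $y^{q}+y=\alpha_{i}^{q+1}$ has exactly $q$ solutions $\beta_{i,0},\dots,\beta_{i,q-1}\in\mathbb{F}_{q^{2}}$, giving the $q$ rational places $P_{(i-1)q+j}$ over $Q_{i}$.

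Next I would compute the different. As $F/\mathbb{F}_{q^{2}}(x)$ is ramified only, and totally, at $\infty$, Riemann--Hurwitz gives $2g-2=q(-2)+d_{\infty}$, so $d_{\infty}=2g+2q-2$ and $\mathrm{Diff}(F/\mathbb{F}_{q^{2}}(x))=(2g+2q-2)P_{\infty}$. Substituting into the displayed formula $(\eta)^{F}=(h')^{F}-D+\mathrm{Diff}(F/\mathbb{F}_{q^{2}}(x))+(n-2q)P_{\infty}$ yields
\[
	(\eta)^{F}+D-(m-1)P_{\infty}=(h')^{F}+(n-m+2g-1)P_{\infty}.
\]
Because $(h')^{F}$ is principal, $f\mapsto fh'$ is an $\mathbb{F}_{q^{2}}$-linear isomorphism $\mathcal{L}\bigl((\eta)+D-(m-1)P_{\infty}\bigr)\xrightarrow{\sim}\mathcal{L}\bigl((n-m+2g-1)P_{\infty}\bigr)$. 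I would take $\{g_{1},\dots,g_{n-k+1}\}$ to be the monomial basis of the target, i.e.\ the functions $x^{c}y^{d}$ with $0\le d\le q-1$ and $cq+d(q+1)\le n-m+2g-1$, ordered so that $g_{n-k+1}=x^{a'}y^{b'}$ has pole order exactly $n-m+2g-1$ at $P_{\infty}$; this is legitimate since $n-m+2g-1\ge 2g$ lies in the Weierstrass semigroup $\langle q,q+1\rangle$ (whose Frobenius number is $2g-1$). Then $\{g_{1}/h',\dots,g_{n-k+1}/h'\}$ is a basis of $\mathcal{L}\bigl((\eta)+D-(m-1)P_{\infty}\bigr)$ with $g_{n-k+1}/h'\notin\mathcal{L}\bigl((\eta)+D-mP_{\infty}\bigr)$, so Corollary~\ref{dualex} applies and exhibits $G'_{ex}$ with last entry $-\lambda$, where $\lambda=\sum_{i=1}^{n}f_{k}(P_{i})\,\frac{g_{n-k+1}}{h'}(P_{i})$ and $f_{k}$, the top element of the fixed basis of $\mathcal{L}(mP_{\infty})$, may be taken to be $x^{a}y^{b}$ with $aq+b(q+1)=m$, $0\le b\le q-1$ (possible as $m\ge 2g$).

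It remains to prove $\lambda=1$. I would first observe that $a+a'=t-1$ and $b+b'=q-1$: summing the two representations gives $(a+a')q+(b+b')(q+1)=m+(n-m+2g-1)=n+2g-1=(t-1)q+(q-1)(q+1)$, and reducing modulo $q$ forces $b+b'\equiv q-1\pmod q$, hence $b+b'=q-1$ (as $0\le b+b'\le 2q-2$) and then $a+a'=t-1$. Thus $f_{k}g_{n-k+1}=x^{t-1}y^{q-1}$, and
\[
	\lambda=\sum_{i=1}^{t}\sum_{j=0}^{q-1}\frac{\alpha_{i}^{t-1}\beta_{i,j}^{q-1}}{h'(\alpha_{i})}=\sum_{i=1}^{t}\frac{\alpha_{i}^{t-1}}{h'(\alpha_{i})}\sum_{j=0}^{q-1}\beta_{i,j}^{q-1}.
\]
The inner sum is the $(q-1)$-th power sum of the roots of $T^{q}+T-\alpha_{i}^{q+1}$, whose elementary symmetric functions satisfy $e_{1}=\dots=e_{q-2}=0$ and $e_{q-1}=1$, so Newton's identity at level $q-1$ gives $\sum_{j}\beta_{i,j}^{q-1}=-(q-1)e_{q-1}=-(q-1)=1$ in $\mathbb{F}_{q^{2}}$. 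Combined with the Lagrange-interpolation identity $\sum_{i=1}^{t}\alpha_{i}^{t-1}/h'(\alpha_{i})=1$ used already in the rational-function-field example, this gives $\lambda=1$, and Corollary~\ref{dualex} then identifies $G'_{ex}$ as a generator matrix of $C_{ex}(D,mP_{\infty})^{\perp}$. I expect the main obstacle to be the semigroup bookkeeping in this last step --- pinning down $f_{k}g_{n-k+1}=x^{t-1}y^{q-1}$ through uniqueness of representations in $\langle q,q+1\rangle$ --- after which the evaluation of $\lambda$ via the power-sum and interpolation identities is routine.
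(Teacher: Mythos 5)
Your proposal is correct and follows essentially the same route as the paper: specialize $\eta=\text{Cotr}_{F/\mathbb{F}_{q^2}(x)}(dh/h)$ to get $(\eta)+D-(m-1)P_{\infty}=(h')+(n-m+2g-1)P_{\infty}$, choose monomial basis elements so that $f_kg_{n-k+1}=x^{t-1}y^{q-1}$, and evaluate $\lambda=\sum_i\frac{\alpha_i^{t-1}}{h'(\alpha_i)}\sum_j\beta_{i,j}^{q-1}=1$. The only difference is that you supply details the paper leaves implicit (complete splitting, the different via Riemann--Hurwitz instead of quoting $(dx)=(2g-2)P_{\infty}$, the mod-$q$ semigroup argument forcing $x^{t-1}y^{q-1}$, and Newton's identities for $\sum_j\beta_{i,j}^{q-1}=1$).
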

\begin{proof}
It is straightforward to verify that
\[
	(x-\alpha_i)_F=P_{(i-1)q}+P_{(i-1)q+1}+\cdots+P_{iq-1}-qP_{\infty},
\]
and consequently
\[
	(h)_F=D-nP_{\infty}.
\]
Note that $(dx)=(2g-2)P_{\infty}$ holds for Hermitian curves, we consequently derive
\[
	(\eta)_F+D-(m-1)P_{\infty}=(h')_F+(n-m+2g-1)P_{\infty}.
\]
Since $v_{P_{\infty}}(f_{k})=-m$ and $v_{P_{\infty}}(g_{n-k+1})=-(n-m+2g-1)$, it follows that
\[
	v_{P_{\infty}}(f_{k}g_{n-k+1})=-(n+2g-1)=-(tq+q^2-q-1).
\]
We can therefore select a basis for $\mathcal{L}((n-k+2g-1)P_{\infty})$ such that
\[
	f_{k}g_{n-k+1}=x^{t-1}y^{q-1},
\]
which yields
\begin{align*}
\lambda&=\sum\limits_{i=1}^n\frac{f_kg_{n-m+1}}{h'}(P_i)\\
&=\sum\limits_{i=1}^n\frac{x^{t-1}y^{q-1}}{h'}(P_i)\\
&=\sum\limits_{i=1}^t\frac{\alpha_i^{t-1}}{h'(\alpha_i)}\left(\sum\limits_{j=0}^{q-1}\beta_{i,j}^{q-1}\right)\\
&=\sum\limits_{i=1}^t\frac{\alpha_i^{t-1}}{h'(\alpha_i)}\\
&=1.
\end{align*}
By Corollary \ref{dualex}, the proof is completed.
\end{proof}

\subsection{Codes with Length \texorpdfstring{$n+2$}{}}
Let $\delta$ be an arbitrary element in $\mathbb{F}_q$, and define
\[
	\Delta=(0,\ldots,0,1,\delta).
\]
The Roth-Lempel type code $C_{RL,\delta}(D,mP_{\infty})$ is generated by the matrix
\begin{equation*}
	G_{k,RL,\delta}:=\begin{pmatrix}
		G_k & \infty^T & \Delta^T\\
	\end{pmatrix}.
\end{equation*}
\begin{theorem}\label{the.dualrl}
The code $C_{RL,\delta}(D,mP_{\infty})$ is a $[n+2,k]$-linear code over $\mathbb{F}_q$. For any
\[
	\omega^{m-1}\in\Omega_F((m-1)P_{\infty}-D)\setminus\Omega_F(mP_{\infty}-D),
\]
and
\[
	\omega^{m-2}\in\Omega_F((m-2)P_{\infty}-D)\setminus\Omega_F((m-1)P_{\infty}-D),
\]
the matrix
\begin{equation*}
	H_{k,RL,\delta}:=\begin{pmatrix}
		H_k & {\bf 0} & {\bf 0}\\
		{\bf w}^{m-1} & \omega^{m-1}_{P_{\infty}}(f_{k}) &0\\
		{\bf w}^{m-2} & \omega^{m-2}_{P_{\infty}}(f_{k})-\delta\omega^{m-2}_{P_{\infty}}(f_{k-1}) &\omega^{m-2}_{P_{\infty}}(f_{k-1})\\
	\end{pmatrix}
\end{equation*}
serves a parity check matrix for $C_{RL,\delta}(D,mP_{\infty})$ where
\[
	{\bf w}^{m-1}=(\omega^{m-1}_{P_1}(1),\omega^{m-1}_{P_2}(1),\ldots,\omega^{m-1}_{P_n}(1)),
\] 
and
\[
	{\bf w}^{m-2}=(\omega^{m-2}_{P_1}(1),\omega^{m-2}_{P_2}(1),\ldots,\omega^{m-2}_{P_n}(1)).
\] 
\end{theorem}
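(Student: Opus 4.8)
The plan is to prove three things and combine them: (a) $C_{RL,\delta}(D,mP_{\infty})$ is an $[n+2,k]$ code; (b) the row space of $H_{k,RL,\delta}$ is contained in $C_{RL,\delta}(D,mP_{\infty})^{\perp}$; and (c) $H_{k,RL,\delta}$ has full row rank $n-k+2$. Since $\dim_{\mathbb{F}_q}C_{RL,\delta}(D,mP_{\infty})^{\perp}=n+2-k=n-k+2$, parts (b) and (c) force the row space of $H_{k,RL,\delta}$ to equal $C_{RL,\delta}(D,mP_{\infty})^{\perp}$, which is the assertion.

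For (a) I read the length off as the number of columns of $G_{k,RL,\delta}$; since $\deg(mP_{\infty})=m<n$ the evaluation map $\mathrm{ev}_D$ is injective, so $G_k$ has rank $k$ and adjoining the columns $\infty^T,\Delta^T$ cannot lower the rank, giving dimension $k$. I also record, by the same Riemann--Roch computation used for $C_{ex}(D,mP_{\infty})$, that $\dim_{\mathbb{F}_q}\Omega_F((m-2)P_{\infty}-D)=n-m+g+1>\dim_{\mathbb{F}_q}\Omega_F((m-1)P_{\infty}-D)>\dim_{\mathbb{F}_q}\Omega_F(mP_{\infty}-D)$, so both set differences in the hypothesis are nonempty; furthermore the set-difference conditions pin $v_{P_{\infty}}(\omega^{m-1})=m-1$ and $v_{P_{\infty}}(\omega^{m-2})=m-2$ exactly, while $v_{P_i}(\omega^{m-1}),v_{P_i}(\omega^{m-2})\ge -1$ for every $i$.

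For (b) I normalise the basis so that $f_1,\dots,f_{k-2}$ spans $\mathcal{L}((m-2)P_{\infty})$ and $f_1,\dots,f_{k-1}$ spans $\mathcal{L}((m-1)P_{\infty})$ (hence $v_{P_{\infty}}(f_{k-1})=-(m-1)$ and $v_{P_{\infty}}(f_k)=-m$), which is possible via Riemann--Roch. A codeword is $(f(P_1),\dots,f(P_n),a_k,a_{k-1}+\delta a_k)$ for $f=\sum_i a_if_i\in\mathcal{L}(mP_{\infty})$; the first $n-k$ rows annihilate it because $(f(P_1),\dots,f(P_n))\in C_L(D,mP_{\infty})$. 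For the last two rows I use that a Weil differential vanishes on principal adeles, $\omega(f)=\sum_{\text{all }P}\mathrm{Res}_P(f\omega)=0$. Because $\omega^{m-1},\omega^{m-2}$ are regular off $\{P_1,\dots,P_n\}$ and $f$ is regular off $P_{\infty}$, the only surviving residues are $\mathrm{Res}_{P_i}(f\omega)=f(P_i)\,\omega_{P_i}(1)$ together with the $P_{\infty}$-terms, which by the pole-order bounds collapse to $\mathrm{Res}_{P_{\infty}}(f\omega^{m-1})=a_k\,\omega^{m-1}_{P_{\infty}}(f_k)$ and $\mathrm{Res}_{P_{\infty}}(f\omega^{m-2})=a_{k-1}\,\omega^{m-2}_{P_{\infty}}(f_{k-1})+a_k\,\omega^{m-2}_{P_{\infty}}(f_k)$. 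Plugging these into $\omega^{m-1}(f)=0$ and $\omega^{m-2}(f)=0$ and re-expressing in the coordinates $a_k$ and $a_{k-1}+\delta a_k$ reproduces exactly rows $n-k+1$ and $n-k+2$ of $H_{k,RL,\delta}$; the terms $-\delta\,\omega^{m-2}_{P_{\infty}}(f_{k-1})$ and $\omega^{m-2}_{P_{\infty}}(f_{k-1})$ in the last row are precisely what is needed so that the relation involves $a_{k-1}+\delta a_k$ instead of $a_{k-1}$.

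For (c): $H_k$ accounts for rank $n-k$, and every row coming from $H_k$ has zeros in the last two columns. The entry $\omega^{m-1}_{P_{\infty}}(f_k)=\mathrm{Res}_{P_{\infty}}(f_k\omega^{m-1})$ is nonzero, since $v_{P_{\infty}}(f_k\omega^{m-1})=-m+(m-1)=-1$ and a differential with a simple pole has nonzero residue there, so row $n-k+1$ gives a pivot in column $n+1$; similarly $v_{P_{\infty}}(f_{k-1}\omega^{m-2})=-(m-1)+(m-2)=-1$ forces $\omega^{m-2}_{P_{\infty}}(f_{k-1})\ne 0$, so row $n-k+2$ gives a pivot in column $n+2$; hence the $n-k+2$ rows are independent. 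I expect (c) --- the two non-vanishing $P_{\infty}$-residues --- to be the main obstacle: it rests on having arranged $f_{k-1},f_k$ to have pole order exactly $m-1,m$ and on the hypotheses forcing $v_{P_{\infty}}(\omega^{m-1})=m-1$, $v_{P_{\infty}}(\omega^{m-2})=m-2$, so that $f_k\omega^{m-1}$ and $f_{k-1}\omega^{m-2}$ have simple poles at $P_{\infty}$. A minor point to dispatch carefully in (b) is that no residues slip in at places other than the $P_i$ and $P_{\infty}$, which is immediate from the divisor lower bounds on $\omega^{m-1}$ and $\omega^{m-2}$.
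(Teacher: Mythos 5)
Your argument is essentially the paper's: the orthogonality of the last two rows is obtained exactly as in the paper, by letting the Weil differentials $\omega^{m-1},\omega^{m-2}$ vanish on elements of $\mathcal{L}(mP_{\infty})$ and tracking pole orders at $P_{\infty}$ (the paper's five displayed identities are just your two relations $\omega^{m-1}(f)=\omega^{m-2}(f)=0$ written out row by row), under the same implicit normalization that the basis of $\mathcal{L}(mP_{\infty})$ is filtered by pole order. The only genuine addition is that you verify full row rank via the nonvanishing of the simple-pole residues $\omega^{m-1}_{P_{\infty}}(f_k)$ and $\omega^{m-2}_{P_{\infty}}(f_{k-1})$, a point the paper leaves implicit (and which, like the paper's own argument, tacitly requires $m-1$ to be a pole number at $P_{\infty}$, so the corner case $m=2g$ with $2g-1$ a Weierstrass gap is excluded in both treatments).
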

\begin{proof}
The length and dimension of $C_{RL,\delta}(D,mP_{\infty})$ follow directly from the condition $2g\le m\le n-1$. The differentials $\omega^{m-1}$ and $\omega^{m-2}$ are well-defined, with valuations
\[
	v_{P_{\infty}}(\omega^{m-1})=m-1\ \text{and}\ v_{P_{\infty}}(\omega^{m-2})=m-2.
\]
To complete the proof, it remains to show the following five equations:
\begin{itemize}
	\item $\sum\limits_{i=1}\omega^{m-1}_{P_i}(1)f_{j}(P_i)=0$ and $\sum\limits_{i=1}\omega^{m-2}_{P_i}(1)f_{j}(P_i)=0$ for $1\le j\le k-2$;
	\item $\sum\limits_{i=1}\omega^{m-1}_{P_i}(1)f_{k-1}(P_i)=0$;
	\item $\sum\limits_{i=1}\omega^{m-1}_{P_i}(1)f_{k}(P_i)+\omega^{m-1}_{P_{\infty}}(f_{k})=0$;
	\item $\sum\limits_{i=1}\omega^{m-2}_{P_i}(1)f_{k-1}(P_i)+\omega^{m-2}_{P_{\infty}}(f_{k-1})=0$;
	\item $\sum\limits_{i=1}\omega^{m-2}_{P_i}(1)f_{k}(P_i)+\omega^{m-2}_{P_{\infty}}(f_{k})=0$.
\end{itemize}
The first two equations follow from the valuations $v_{P_{\infty}}(f_j)\le m-1$ for $1\le j\le k-2$. The third equation is derived from the fact that Weil differentials vanish on $F$, which means that:
\begin{align*}
	0&=\omega^{m-1}(f_k)\\
	&=\sum\limits_{i=1}^n\omega^{m-1}_{P_i}(f_k)+\omega^{m-1}_{P_{\infty}}(f_{k})\\
	&=\sum\limits_{i=1}^n\omega^{m-1}_{P_i}(1)f_{k}(P_i)+\omega^{m-1}_{P_{\infty}}(f_{k}).
\end{align*}
Subsequently, the remaining equations can be established in a similar manner. Then the proof is completed.
\end{proof}
\begin{corollary}\label{dualrl}
Let $\eta$ be a Weil differential satisfying $v_{P_i}(\eta)=-1$ and $\eta_{P_i}(1)=1$ for $i=1,\ldots,n$. Then there exists a basis $\{g_1,\ldots,g_{n-k+2}\}$ of $\mathcal{L}((\eta)+D-(m-2)P_{\infty})$ such that the matrix
\begin{equation*}
	G'_{RL,\delta}:=\begin{pmatrix}
		g_1(P_1)& g_1(P_2)&\cdots &g_1(P_n)&0&0\\
		g_2(P_1)& g_2(P_2)&\cdots &g_2(P_n)&0&0\\
		\vdots&\vdots&\ddots&\vdots&\vdots&\vdots\\
		g_{n-k+1}(P_1)& g_{n-k+1}(P_2)&\cdots &g_{n-k+1}(P_n)&-\lambda_1&0\\
		g_{n-k+2}(P_1)& g_{n-k+2}(P_2)&\cdots &g_{n-k+2}(P_n)&-\lambda_3&-\lambda_2\\
	\end{pmatrix}
\end{equation*}
constitutes a generator matrix for $C_{RL,\delta}(D,mP_{\infty})^{\perp}$, where
\begin{align*}
	&\lambda_1=\sum_{i=1}^{n}f_{k}(P_i)g_{n-k+1}(P_i),\\
	&\lambda_2=\sum_{i=1}^{n}f_{k-1}(P_i)g_{n-k+2}(P_i),\\
	&\lambda_3=\sum_{i=1}^{n}f_{k}(P_i)g_{n-k+2}(P_i)-\delta\lambda_2.
\end{align*}
\end{corollary}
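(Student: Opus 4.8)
The plan is to follow the template of Corollary~\ref{dualex}, the only new feature being that the relevant filtration of Riemann--Roch spaces now has three steps instead of two. As in Corollary~\ref{dualex}, the Weak Approximation Theorem supplies a Weil differential $\eta$ with $v_{P_i}(\eta)=-1$ and $\eta_{P_i}(1)=1$ for $1\le i\le n$, and Lemma~\ref{isomorphism} applied to the canonical divisor $W=(\eta)$ gives an $\mathbb{F}_q$-isomorphism
\[
\mu:\mathcal{L}((\eta)+D-(m-2)P_{\infty})\longrightarrow\Omega_F((m-2)P_{\infty}-D),\qquad \mu(f')=f'\eta .
\]
Using the dimension identities recorded in the dimension count of the previous subsection (equivalently, a Riemann--Roch computation), together with $k=m-g+1$, one obtains the chain of proper inclusions
\[
\mathcal{L}((\eta)+D-mP_{\infty})\subsetneq\mathcal{L}((\eta)+D-(m-1)P_{\infty})\subsetneq\mathcal{L}((\eta)+D-(m-2)P_{\infty})
\]
of $\mathbb{F}_q$-dimensions $n-k$, $n-k+1$, $n-k+2$.

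Next I would fix a basis $\{g_1,\ldots,g_{n-k+2}\}$ adapted to this flag: take $g_1,\ldots,g_{n-k}$ to be a basis of $\mathcal{L}((\eta)+D-mP_{\infty})$ which, by Proposition~\ref{dual} and Lemma~\ref{generator}, may be chosen so that $(g_{\ell}(P_1),\ldots,g_{\ell}(P_n))$ for $1\le\ell\le n-k$ are precisely the rows of $H_k$; then adjoin $g_{n-k+1}$ so that $\{g_1,\ldots,g_{n-k+1}\}$ spans $\mathcal{L}((\eta)+D-(m-1)P_{\infty})$, and $g_{n-k+2}$ so that $\{g_1,\ldots,g_{n-k+2}\}$ spans $\mathcal{L}((\eta)+D-(m-2)P_{\infty})$. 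Then $\mu(g_{n-k+1})\in\Omega_F((m-1)P_{\infty}-D)\setminus\Omega_F(mP_{\infty}-D)$ and $\mu(g_{n-k+2})\in\Omega_F((m-2)P_{\infty}-D)\setminus\Omega_F((m-1)P_{\infty}-D)$, so these are admissible choices of $\omega^{m-1}$ and $\omega^{m-2}$ in Theorem~\ref{the.dualrl}.

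With these choices I would rewrite the rows of $H_{k,RL,\delta}$ in function form. Since $v_{P_i}(\eta)=-1$, every $f'\in\mathcal{L}((\eta)+D-(m-2)P_{\infty})$ has $v_{P_i}(f')\ge 0$, so $f'(P_i)$ is defined and $\mu(f')_{P_i}(1)=f'(P_i)\eta_{P_i}(1)=f'(P_i)$; hence ${\bf w}^{m-1}=(g_{n-k+1}(P_1),\ldots,g_{n-k+1}(P_n))$ and ${\bf w}^{m-2}=(g_{n-k+2}(P_1),\ldots,g_{n-k+2}(P_n))$. Applying that Weil differentials vanish on $F$ to $\mu(g_{n-k+1})(f_k)=0$ and to $\mu(g_{n-k+2})(f_k)=\mu(g_{n-k+2})(f_{k-1})=0$ yields
\[
\omega^{m-1}_{P_{\infty}}(f_k)=-\lambda_1,\qquad \omega^{m-2}_{P_{\infty}}(f_{k-1})=-\lambda_2,\qquad \omega^{m-2}_{P_{\infty}}(f_k)-\delta\,\omega^{m-2}_{P_{\infty}}(f_{k-1})=-\lambda_3,
\]
which are exactly the entries occupying the last two columns of $G'_{RL,\delta}$ (with the first $n-k$ rows carrying two trailing zeros, coming from the block $(H_k\ \mathbf{0}\ \mathbf{0})$). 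Therefore $G'_{RL,\delta}$ coincides with the parity check matrix $H_{k,RL,\delta}$ of Theorem~\ref{the.dualrl}; since it has $n-k+2=(n+2)-k$ rows, it is a generator matrix of $C_{RL,\delta}(D,mP_{\infty})^{\perp}$.

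The only delicate point I foresee is purely bookkeeping: one must choose the basis $\{f_1,\ldots,f_k\}$ of $\mathcal{L}(mP_{\infty})$ compatibly with the flag $\mathcal{L}((m-2)P_{\infty})\subseteq\mathcal{L}((m-1)P_{\infty})\subseteq\mathcal{L}(mP_{\infty})$, so that the residue vanishings at $P_{\infty}$ invoked in Theorem~\ref{the.dualrl} (for $f_1,\ldots,f_{k-2}$ against $\omega^{m-2}$ and for $f_{k-1}$ against $\omega^{m-1}$) indeed hold; this relies on the identities $\dim_{\mathbb{F}_q}\mathcal{L}((m-1)P_{\infty})=k-1$ and $\dim_{\mathbb{F}_q}\mathcal{L}((m-2)P_{\infty})=k-2$, valid since $m\ge 2g$, and on matching each of $\lambda_1,\lambda_2,\lambda_3$ to the correct column. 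Beyond this, the argument is a verbatim extension of the proof of Corollary~\ref{dualex}.
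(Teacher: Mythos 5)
Your proposal is correct and follows essentially the same route as the paper's own proof: apply Lemma~\ref{isomorphism} with $W=(\eta)$ to get $\mu(f')=f'\eta$, pick a basis of $\mathcal{L}((\eta)+D-(m-2)P_{\infty})$ adapted to the flag so that $\mu(g_{n-k+1})$ and $\mu(g_{n-k+2})$ serve as $\omega^{m-1}$ and $\omega^{m-2}$ in Theorem~\ref{the.dualrl}, use $\mu(f')_{P_i}(1)=f'(P_i)$ to rewrite the parity-check rows in function form, and read off $\lambda_1,\lambda_2,\lambda_3$. Your version merely spells out the dimension counts and residue computations that the paper leaves implicit.
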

\begin{proof}
Following the proof of Corollary \ref{dualex}, there exists an isomorphism
\[
	\mu:\mathcal{L}((\eta)+D-(m-2)P_{\infty})\to\Omega_F((m-2)P_{\infty}-D)
\]
defined via $\mu(f'):=f'\eta$. Let $\{g_1,\ldots,g_{n-k+1},g_{n-k+2}\}$ be a basis of $\mathcal{L}((\eta)+D-(m-2)P_{\infty})$ such that
\[
	g_{n-k+1}\in\mathcal{L}((\eta)+D-(m-1)P_{\infty})\setminus\mathcal{L}((\eta)+D-mP_{\infty}),
\]
and
\[
	g_{n-k+2}\in\mathcal{L}((\eta)+D-(m-2)P_{\infty})\setminus\mathcal{L}((\eta)+D-(m-1)P_{\infty}).
\]
Consequently, we obtain
\[
	\mu(g_{n-k+1})\in\Omega_F((m-1)P_{\infty}-D)\setminus\Omega_F(mP_{\infty}-D),
\]
and
\[
	\mu(g_{n-k+2})\in\Omega_F((m-2)P_{\infty}-D)\setminus\Omega_F((m-1)P_{\infty}-D).
\]
For any $f'\in\mathcal{L}((\eta)+D-(m-1)P_{\infty})$, the values $f'(P_i)$ are well-defined since $v_{P_i}(\eta)=-1$. The values $\lambda_1,\lambda_2,\lambda_3$ are determined by
\[
	\mu(f')_{P_i}(1)=f'(P_i)\eta_{P_i}(1)=f'(P_i).
\]
The conclusion then follows from Theorem \ref{the.dualrl}.
\end{proof}
Let $C_{RL}(D,mP_{\infty})=C_{RL,0}(D,mP_{\infty})$. We establish the following results.
\begin{theorem}\label{mdrl}
Suppose that $C_L(D,mP_{\infty})$ has minimum distance $d=n-m$. If the code generated by the matrix 
\begin{equation*}
	G_1:=\begin{pmatrix}
		f_1(P_1)& f_1(P_2)&\cdots &f_1(P_n)\\
		f_2(P_1)& f_2(P_2)&\cdots &f_2(P_n)\\
		\vdots&\vdots&\ddots&\vdots\\
		f_{k-2}(P_1)& f_{k-2}(P_2)&\cdots &f_{k-2}(P_n)\\
		f_{k}(P_1)& f_{k}(P_2)&\cdots &f_{k}(P_n)\\
	\end{pmatrix}
\end{equation*}
has minimum distance $n-m+1$, then $C_{RL}(D,mP_{\infty})$ satisfies $d_{RL}=n-m+2$; otherwise $d_{RL}=n-m+1$. Furthermore, let $d'=m-(2g-2)$ denote the minimum distance of $C_L(D,mP_{\infty})^{\perp}$. If the code generated by the matrix
\begin{equation*}
	G_2:=\begin{pmatrix}
		g_1(P_1)& g_1(P_2)&\cdots &g_1(P_n)\\
		g_2(P_1)& g_2(P_2)&\cdots &g_2(P_n)\\
		\vdots&\vdots&\ddots&\vdots\\
		g_{n-k}(P_1)& g_{n-k}(P_2)&\cdots &g_{n-k}(P_n)\\
		g_{n-k+2}(P_1)& g_{n-k+2}(P_2)&\cdots &g_{n-k+2}(P_n)\\
	\end{pmatrix}
\end{equation*}
has minimum distance $m-(2g-2)-1$, then $C_{RL}(D,mP_{\infty})$ attains $d'_{RL}=d'$; otherwise $d'_{RL}=d'-1$.
\end{theorem}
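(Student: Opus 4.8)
The plan is to read off every codeword of $C_{RL}(D,mP_\infty)=C_{RL,0}(D,mP_\infty)$ directly from its generator matrix $(G_k\,|\,\infty^T\,|\,\Delta^T)$ with $\delta=0$: each nonzero codeword has the shape $(f(P_1),\dots,f(P_n),a_k,a_{k-1})$ for a unique nonzero $f=\sum_{i=1}^{k}a_if_i\in\mathcal L(mP_\infty)$, so its Hamming weight equals $\bigl(n-z(f)\bigr)+[a_k\neq 0]+[a_{k-1}\neq 0]$, where $z(f):=\#\{i:f(P_i)=0\}$. I would fix the basis $\{f_i\}$ to be triangular at $P_\infty$, so that $v_{P_\infty}(f_k)=-m$ and (generically) $v_{P_\infty}(f_{k-1})=-(m-1)$; then a nonzero function whose pole divisor is $\le rP_\infty$ has at most $r$ zeros, giving $z(f)\le m$ if $a_k\neq0$, $z(f)\le m-1$ if $a_k=0\neq a_{k-1}$, and $z(f)\le m-2$ if $a_k=a_{k-1}=0$. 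The only degenerate possibility, $m=2g$ with $m-1$ a Weierstrass gap at $P_\infty$, I would dispose of separately; it is easier.

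First I would prove $d_{RL}\ge n-m+1$ by running the weight formula through the four sign patterns of $(a_k,a_{k-1})$ and invoking the zero bounds above; every case yields weight $\ge n-m+1$, the tight pattern being $a_k\neq0,\ a_{k-1}=0$. Next, the hypothesis $d=n-m$ gives an $f^{*}\in\mathcal L(mP_\infty)$ with $z(f^{*})=m$, which forces $v_{P_\infty}(f^{*})=-m$ and hence $a_k(f^{*})\neq0$; its codeword has weight $(n-m)+1+[a_{k-1}(f^{*})\neq0]\le n-m+2$, so $d_{RL}\le n-m+2$. To pin down the exact value I would return to the four patterns and note that a codeword of weight exactly $n-m+1$ can occur \emph{only} under $a_k\neq0,\ a_{k-1}=0,\ z(f)=m$, i.e.\ only from a nonzero $f\in\mathrm{span}\{f_1,\dots,f_{k-2},f_k\}$ with $m$ zeros among $P_1,\dots,P_n$ --- which is exactly a weight-$(n-m)$ codeword of the code generated by $G_1$. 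Thus $d_{RL}=n-m+1$ precisely when the $G_1$-code attains the least possible minimum distance $n-m$, and $d_{RL}=n-m+2$ in the complementary case, where the $G_1$-code has minimum distance $n-m+1$.

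For the dual I would argue symmetrically from Corollary \ref{dualrl}: a codeword of $C_{RL}(D,mP_\infty)^{\perp}$ reads $\bigl(g'(P_1),\dots,g'(P_n),\,-b_{n-k+1}\lambda_1-b_{n-k+2}\lambda_3,\,-b_{n-k+2}\lambda_2\bigr)$ for $g'=\sum_j b_jg_j$ ranging over $\mathcal L((\eta)+D-(m-2)P_\infty)$, while the chain $\mathcal L((\eta)+D-mP_\infty)\subsetneq\mathcal L((\eta)+D-(m-1)P_\infty)\subsetneq\mathcal L((\eta)+D-(m-2)P_\infty)$ has successive designed distances $d'=m-(2g-2)$, $d'-1$, $d'-2$, with the smallest space yielding $C_L(D,mP_\infty)^{\perp}$ of minimum distance $d'$ by hypothesis. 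Two facts are needed first. (i) $\lambda_1\neq0$ and $\lambda_2\neq0$: indeed $\lambda_1=-\mathrm{Res}_{P_\infty}\!\bigl(f_k\,\mu(g_{n-k+1})\bigr)$ and $v_{P_\infty}\bigl(f_k\,\mu(g_{n-k+1})\bigr)=-m+(m-1)=-1$, so the residue is a nonzero leading coefficient; likewise $\lambda_2\ne0$ using $f_{k-1}$ and $\mu(g_{n-k+2})$. (ii) Using $\lambda_1\neq0$, I may replace $g_{n-k+2}$ by $g_{n-k+2}-(\lambda_3/\lambda_1)g_{n-k+1}$ so that $\lambda_3=0$ (leaving $\lambda_1,\lambda_2$ unaffected); this is the completion of the basis underlying $G_2$. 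Then the same case split on $(b_{n-k+1},b_{n-k+2})$, together with $\lambda_1,\lambda_2\neq0$ and the designed-distance bounds, gives $d'_{RL}\ge d'-1$ and shows that a weight-$(d'-1)$ codeword can occur only with $b_{n-k+2}\neq0,\ b_{n-k+1}=0$ and $g'$-part of weight $d'-2$ --- that is, only from a weight-$(d'-2)$ codeword of the $G_2$-code; together with the upper bound $d'_{RL}\le d'$ (from a minimum-weight codeword of $C_L(D,mP_\infty)^{\perp}$) this yields $d'_{RL}=d'-1$ precisely when the $G_2$-code attains the least possible minimum distance $d'-2$, and $d'_{RL}=d'$ otherwise, where the $G_2$-code has minimum distance $d'-1$.

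The main obstacle is not the case analysis but the bookkeeping that makes it airtight: fixing a triangular basis at $P_\infty$ and disposing of the exceptional $m=2g$ situation (where $m-1$, or even $m-2$, may fail to be a pole number, changing which Riemann--Roch space the relevant functions lie in and which designed-distance bounds are available), and --- on the dual side --- recognising that the weight-$(d'-1)$ codewords are controlled by $\mathrm{span}\{g_1,\dots,g_{n-k},\,g_{n-k+2}-(\lambda_3/\lambda_1)g_{n-k+1}\}$ rather than by an arbitrary completion of the basis, so that the normalisation $\lambda_3=0$ must be in force for the $G_2$-statement to mean what is intended.
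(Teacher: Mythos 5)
Your proposal is correct and follows essentially the same route as the paper's proof: parametrize the codewords of $C_{RL}(D,mP_{\infty})$ and of its dual (via Corollary \ref{dualrl}) through their last two coordinates, run the case analysis on $(a_k,a_{k-1})$ and $(b_{n-k+1},b_{n-k+2})$, and reduce the existence of minimal-weight words to the codes generated by $G_1$ and $G_2$, with the upper bounds supplied by the hypothesis $d=n-m$ and a minimum-weight word of $C_L(D,mP_{\infty})^{\perp}$. The additional bookkeeping you include --- the residue argument giving $\lambda_1,\lambda_2\neq 0$, the normalisation $\lambda_3=0$ needed for the $G_2$ criterion, and the exceptional situation $m=2g$ with $m-1$ a gap at $P_{\infty}$ --- is precisely the detail that the paper's terser proof leaves implicit, so it strengthens rather than diverges from the published argument.
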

\begin{proof}
For any $f\in\mathcal{L}(mP_{\infty})$ expressed as $f=\sum_{i=1}^{k}a_if_i$, the associated codeword in $C_{RL}(D,mP_{\infty})$ is
\[
	(f(P_1),\ldots,f(P_n),a_{k},a_{k-1}).
\]
If $a_k\neq0$ but $a_{k-1}=0$, the minimum distance condition for $G_1$ implies at most $m-1$ zeros in $\{f(P_1),\ldots,f(P_n),a_{k}\}$, resulting in a codeword weight of at least $n-m+2$. If this condition fails, the codeword weight remains at least $n-m+1$. Subsequently, in all remaining cases for $a_k$ and $a_{k-1}$, the weight necessarily exceeds $n-m+2$. 

For the dual code, if $G_2$ satisfies the minimum distance condition, then any codeword
\[
	(f'(P_1),\ldots,f'(P_n),b_{n-k+2},0)
\]
with $f'=\sum_{j=1}^{n-k+2}b_jg_j$ has weight at least $d'$. Applying a similar methodology from Theorem \ref{mdex}, the proof is completed.
\end{proof}
\begin{remark}
Let $F=\mathbb{F}_q(x)$ be the rational function field, and $C_{L,k-1}(D,mP_{\infty})$ denotes the code generated by $G_1$ in Theorem \ref{mdrl}. It follows that $C_{RL}(D,mP_{\infty})$ is MDS if and only if $C_{L,k-1}(D,mP_{\infty})$ is MDS. This observation connects the MDS conditions for $C_{RL}(D,mP_{\infty})$ introduced in \cite{Roth-nRS} and $C_{L,k-1}(D,mP_{\infty})$ investigated in \cite{Li-Zhu-NMDS} (or \cite{HENG2023113538,HanZ24}).

When $F$ is an elliptic function field, $C_{RL}(D,mP_{\infty})$ is clearly AMDS when $C_{L,k-1}(D,mP_{\infty})$ is AMDS. Furthermore, it can be verified that the dual code $C_{RL}(D,mP_{\infty})^{\perp}$ remains AMDS under the same condition. 
\end{remark}
\begin{proposition}
Let $\mathcal{X}$ be an elliptic curve. If $C_L(D,mP_{\infty})$ is MDS, then $C_{RL}(D,mP_{\infty})$ is NMDS.
\end{proposition}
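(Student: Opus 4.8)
The plan is to compute the parameters of $C_{RL}(D,mP_\infty)$ directly and verify they are those of an NMDS code. Since $\mathcal X$ is elliptic, $g=1$ and $k=m$; I work with a flag-adapted basis, so that $\langle f_1,\dots,f_{k-2}\rangle=\mathcal L((m-2)P_\infty)$, $\langle f_1,\dots,f_{k-1}\rangle=\mathcal L((m-1)P_\infty)$, and $f_{k-1},f_k$ have pole orders $m-1,m$ at $P_\infty$. The hypothesis that $C_L(D,mP_\infty)$ is MDS means exactly $d=n-m+1$; equivalently, every nonzero $f\in\mathcal L(mP_\infty)$ vanishes at at most $m-1$ of the points $P_1,\dots,P_n$, and dually $C_L(D,mP_\infty)^\perp=C_L\big(D,(\eta)+D-mP_\infty\big)$ is MDS as well. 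By Theorem \ref{the.dualrl}, $C_{RL}(D,mP_\infty)$ is $[n+2,m]$, so it suffices to prove $d(C_{RL})=n-m+2$ and $d(C_{RL}^\perp)=m$.

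For $d(C_{RL})\ge n-m+2$: a codeword of $C_{RL}$ attached to $f=\sum_i a_if_i$ (with $\delta=0$) is $(f(P_1),\dots,f(P_n),a_k,a_{k-1})$. If $a_k=a_{k-1}=0$ then $f\in\mathcal L((m-2)P_\infty)$ has at most $m-2$ zeros among the $P_i$; if $a_k=0$ and $a_{k-1}\neq0$ then $f$ has pole order $m-1$, hence at most $m-1$ zeros among the $P_i$, plus a nonzero last coordinate; if $a_k\neq0$ then $f$ is a nonzero element of $\mathcal L(mP_\infty)$, which by the MDS hypothesis has at most $m-1$ zeros among the $P_i$, plus a nonzero coordinate in position $n+1$. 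In each case the weight is at least $n-m+2$. For the matching upper bound, fix any $m-1$ of the points, say $P_1,\dots,P_{m-1}$; the divisor $mP_\infty-P_1-\cdots-P_{m-1}$ has degree $1$, so $\mathcal L(mP_\infty-P_1-\cdots-P_{m-1})$ is one-dimensional, spanned by some $f^\ast$. Since $f^\ast$ cannot vanish at $m$ distinct $P_i$, it vanishes at exactly $P_1,\dots,P_{m-1}$, so $\mathrm{wt}(\mathrm{ev}_D(f^\ast))=n-m+1$; if $f^\ast$ has pole order $m-1$ then $a_k(f^\ast)=0\neq a_{k-1}(f^\ast)$, and if $f^\ast$ has pole order $m$ with $a_{k-1}(f^\ast)=0$, then in either case the codeword of $f^\ast$ has weight $n-m+2$.

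The one step requiring real work is to exclude being stuck, for every choice of the $m-1$ points, with $f^\ast$ of pole order $m$ and $a_{k-1}(f^\ast)\neq0$ — equivalently, to show that $C_{RL}(D,mP_\infty)$ is not MDS. I would route this through the code $C_{L,k-1}(D,mP_\infty)=\mathrm{ev}_D\big(\mathcal L((m-2)P_\infty)\oplus\mathbb{F}_qf_k\big)$ generated by $G_1$ in Theorem \ref{mdrl}: for an $(m-1)$-subset $S$, the condition $a_{k-1}(f^\ast_S)=0$ is equivalent to $f^\ast_S$ lying in that span and, via the isomorphism of Lemma \ref{isomorphism}, to the vanishing of a single residue pairing between $f_k|_S$ and a generator of the one-dimensional annihilator of $\mathcal L((m-2)P_\infty)$ along $S$; one then argues this occurs for some $S$ (so $C_{L,k-1}$ is AMDS, whence the Remark above gives the conclusion). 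If instead $C_L(D,(m-1)P_\infty)$ is already NMDS this is short-circuited: an $(m-1)$-subset $S$ summing to the identity of $\mathcal E$ forces $f^\ast_S\in\mathcal L((m-1)P_\infty)$ of pole order $m-1$, which is the easy case. This is the main obstacle.

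The dual side is symmetric. By Corollary \ref{dualrl}, $C_{RL}(D,mP_\infty)^\perp$ is generated by $G'_{RL,0}$, built from a basis $g_1,\dots,g_{n-k+2}$ of $\mathcal L((\eta)+D-(m-2)P_\infty)$ together with two appended columns carrying $\lambda_1,\lambda_2,\lambda_3$, where (as in the corollaries just above) $\lambda_1,\lambda_2$ are, for elliptic curves, explicit nonzero residue sums. Since $C_L(D,mP_\infty)^\perp$ is MDS, the lower-bound argument transfers verbatim — splitting on the coefficients of $g_{n-k+1}\in\mathcal L((\eta)+D-(m-1)P_\infty)\setminus\mathcal L((\eta)+D-mP_\infty)$ and $g_{n-k+2}\in\mathcal L((\eta)+D-(m-2)P_\infty)\setminus\mathcal L((\eta)+D-(m-1)P_\infty)$, and using $\lambda_1,\lambda_2\neq0$ — to give $d(C_{RL}^\perp)\ge m$; a construction dual to the one above gives $d(C_{RL}^\perp)\le m$. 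Finally, $d(C_{RL})=n-m+2=(n+2)-k$ and $d(C_{RL}^\perp)=m=k$ yield $s(C_{RL})=s(C_{RL}^\perp)=1$, i.e.\ $C_{RL}(D,mP_\infty)$ is NMDS. Everything is routine Riemann–Roch and pole-order bookkeeping except the non-MDS part, where the interaction between $f_k$ and the chosen points must be controlled.
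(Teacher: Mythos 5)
Your structure is right, and you have correctly located where all the content of the proposition lives, but the proposal has a genuine gap — and it is exactly the step you yourself flag as "the main obstacle." Your case analysis only yields the lower bounds, i.e.\ $s(C_{RL})\le 1$ and (modulo the next paragraph) $s(C_{RL}^{\perp})\le 1$; since a code whose defect and dual defect are both at most $1$ is NMDS precisely when it is not MDS, everything beyond bookkeeping is the claim that $C_{RL}(D,mP_{\infty})$ is not MDS, equivalently that for some $(m-1)$-subset $S$ either $f^{\ast}_S$ drops pole order or the coefficient $a_{k-1}(f^{\ast}_S)$ vanishes (i.e.\ that $C_{L,k-1}(D,mP_{\infty})$, the code of $G_1$ in Theorem~\ref{mdrl}, is AMDS rather than MDS). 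You sketch a residue-pairing reformulation of this condition but never prove it holds for some $S$, and you should not expect the paper to supply the missing argument: the paper's proof follows the same route through $C_{L,k-1}$ and likewise passes from the lower bounds to the assertion that the minimum distances are \emph{exactly} $n-k+1$ and $k+1$ without justification. Worse, the step cannot be filled under the stated hypotheses alone. Take $m=k=3$, basis $\{1,x,y\}$ of $\mathcal{L}(3P_{\infty})$, and choose $n\ge 4$ affine points with pairwise distinct $x$-coordinates, pairwise distinct $y$-coordinates, and no three summing to the identity of $\mathcal{E}$ (possible once $\mathcal{E}$ has enough rational points). Then $C_L(D,3P_{\infty})$ is MDS, while every $3$-column minor of $G_{k,RL,0}$ is nonsingular (the two appended columns are $e_3$ and $e_2$, and the required conditions are exactly: no three points collinear, distinct $x$'s, distinct $y$'s), so $C_{RL}(D,3P_{\infty})$ is an $[n+2,3,n]$ MDS code and hence not NMDS. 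So "the pairing vanishes for some $S$" is precisely what can fail; any correct proof needs an additional hypothesis (for instance that $C_{L,k-1}$ is not MDS, or a largeness condition on $n$ forcing some $(m-1)$-subset to sum to the identity).

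Two smaller points on the dual side, where you claim the argument "transfers verbatim." First, $\lambda_2\neq 0$ uses that $f_{k-1}$ has pole order exactly $m-1$, which fails for $m=2$ (Weierstrass gap), where in fact $\lambda_2=0$. Second, even for $m\ge 3$ the case $b_{n-k+2}\neq 0$ only gives weight at least $(m-2)+1$ from the evaluation coordinates and the last coordinate; to reach $m$ you must exclude a $g$ vanishing at $n-m+2$ of the $P_i$ whose $(n+1)$-th coordinate (the residue pairing with $f_k$) also vanishes, and MDS of $C_L(D,mP_{\infty})$ does not obviously rule this out — it is a condition on $(m-2)$-subsets summing to the identity plus one extra residue vanishing, of the same nature as the main gap. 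So both the upper bound $d(C_{RL})\le n-m+2$ and the lower bound $d(C_{RL}^{\perp})\ge m$ require genuinely new input beyond what you (or the paper) have written.
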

\begin{proof}
It is sufficient to prove that $C_L(D,mP_{\infty})$ means $C_{L,k-1}(D,mP_{\infty})$ and $C_{L,k-1}(D,mP_{\infty})^{\perp}$ are both AMDS. The MDS condition of $C_L(D,mP_{\infty})$ implies that any $f\in\mathcal{L}(mP_{\infty})$ has at most $k-1$ zeros in $\{P_1,\ldots,P_n\}$, while any $g\in\mathcal{L}((\eta)+D-mP_{\infty})$ has at most $n-k-1$ zeros. Since the matrices in Theorem \ref{mdrl} are submatrices of the generator and parity check matrices of $C_L(D,mP_{\infty})$ respectively, it means that the minimum distances of $C_{L,k-1}(D,mP_{\infty})$ and $C_{L,k-1}(D,mP_{\infty})^{\perp}$ are $n-k+1$ and $k+1$ respectively. Note that $C_{L,k-1}(D,mP_{\infty})$ and $C_{L,k-1}(D,mP_{\infty})^{\perp}$ are $[n,k-1]$ and $[n,n-k+2]$ linear codes respectively, thus the proof is completely.
\end{proof}
\section{Covering Radii of Extended Algebraic Geometry Codes}
Using the established notation, we derive that the covering radius of $C_{ex}(D,mP_{\infty})$ has $g+2$ possible values in this section. Additionally, let $C_L(D,mP_{\infty})$ be an elliptic code. If $n$ is sufficiently large or there exists an $[n,k+1]$ MDS elliptic code, we show that there will be $2$ possible values of covering radius of $C_{ex}(D,mP_{\infty})$.

The following lemma provides fundamental bounds for covering radii of extended AG codes:
\begin{lemma}\label{crag}
The covering radius of $C_{ex}(D,mP_{\infty})$ satisfies $n-m-1\le\rho(C_{ex})\le n-m+g$, while the covering radius of its dual code satisfies $m-2g\le\rho(C_{ex}^{\perp})\le m-g+1$.
\end{lemma}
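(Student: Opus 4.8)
The plan is to derive the four inequalities from general covering-radius facts applied to the AG codes $C_L(D,mP_\infty)$ and $C_L(D,(m-1)P_\infty)$, together with the structural description of $C_{ex}(D,mP_\infty)$ and its dual obtained in Theorem 3.1 and Corollary \ref{dualex}. For the upper bound $\rho(C_{ex})\le n-m+g$: the code $C_{ex}(D,mP_\infty)$ is an $[n+1,k]$-code with $k=m-g+1$, so the Redundancy Bound gives $\rho(C_{ex})\le (n+1)-k = n-m+g$. Dually, $C_{ex}(D,mP_\infty)^\perp$ has dimension $n+1-k = n-m+g$, so by Corollary \ref{dualex} it is an $[n+1,\,n-m+g]$-code, and the Redundancy Bound gives $\rho(C_{ex}^\perp)\le (n+1)-(n-m+g) = m-g+1$. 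This disposes of both upper bounds essentially immediately.

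For the lower bound $\rho(C_{ex})\ge n-m-1$: the idea is that puncturing (or restricting attention to) a subcode cannot increase the covering radius by more than a controlled amount, so I would compare $C_{ex}$ with a code of known large covering radius. Concretely, by Corollary \ref{dualex} the dual $C_{ex}^\perp$ contains $C_\Omega(D,(m-1)P_\infty)$ embedded in the first $n$ coordinates (the codewords with final coordinate $0$), which is the dual of $C_L(D,(m-1)P_\infty)$, an AG code of dimension $m-1-g+1 = m-g$ and designed distance roughly $n-m+1$. One then uses the fact that for a code $\mathcal C'$ obtained from $\mathcal C$ by adding one coordinate, $\rho(\mathcal C) \ge \rho(\mathcal C') - 1$; applying this with the known bound $\rho \ge d^\ast - $ (something) for AG codes — or more directly, producing an explicit deep hole — forces $\rho(C_{ex})\ge n-m-1$. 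The cleanest route is probably: exhibit a coset of $C_{ex}$ whose minimum weight is at least $n-m-1$ by choosing a received vector supported so that it disagrees with every codeword of $C_L(D,mP_\infty)$ (appended with any symbol) in at least $n-m-1$ positions, using that $\mathcal L((m+t)P_\infty)$ for small $t$ still embeds. Symmetrically, $\rho(C_{ex}^\perp)\ge m-2g$ follows by the same argument applied to the dual picture, using that $C_{ex}^\perp$ contains an AG code of designed distance about $m-2g+1$.

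The main obstacle will be the two lower bounds: the upper bounds are routine applications of the Redundancy Bound, but the lower bounds require either a careful deep-hole construction or a precise ``covering radius of a subcode/punctured code'' inequality together with the covering-radius lower bound for ordinary AG codes (the analogue of $\rho \ge n - \deg G - g$ type estimates). I expect the cleanest version is to invoke the standard fact that for any $[n,k]$ code the covering radius is at least $d^\perp - 1$ where $d^\perp$ is the dual distance — no, better: use that $\rho(\mathcal C)$ is at least the covering radius of any code obtained by deleting coordinates minus the number deleted, reduce to $C_L(D,mP_\infty)$ and $C_L(D,(m-1)P_\infty)$, and quote a known lower bound $\rho(C_L(D,G)) \ge n - \deg(G) - 1$ (which holds because a weight-$(n-\deg G -1)$ coset leader exists whenever $\deg G < n$). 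Assembling the constants carefully — making sure the genus $g$ enters with the right sign in each of the four bounds and that the appended coordinate shifts things by exactly one — is the delicate bookkeeping step, but no deep new idea is needed beyond the generator/parity-check descriptions already proved.
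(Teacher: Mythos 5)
Your two upper bounds are exactly the paper's argument (Redundancy Bound applied to the $[n+1,k]$ code and its $[n+1,n-k+1]$ dual), so that part is fine. The problem is the lower bounds, which you correctly identify as the real content but never actually prove. The bound you want to ``quote,'' $\rho(C_L(D,G))\ge n-\deg G-1$, is justified in your write-up only by the parenthetical claim that ``a weight-$(n-\deg G-1)$ coset leader exists whenever $\deg G<n$'' --- but that existence is precisely the assertion to be proved, so as written the step is circular. The paper's (standard, supercode-style) argument that fills this hole is: since $\ell((m+1)P_\infty)>\ell(mP_\infty)$, one can pick ${\bf v}\in C_L(D,(m+1)P_\infty)\setminus C_L(D,mP_\infty)$; every difference ${\bf v}-{\bf c}$ with ${\bf c}\in C_L(D,mP_\infty)$ is a nonzero codeword of $C_L(D,(m+1)P_\infty)$, hence has weight at least the designed distance $n-m-1$; appending a $0$ coordinate cannot decrease the distance to any codeword of $C_{ex}(D,mP_\infty)$, giving $\rho(C_{ex})\ge n-m-1$. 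You gesture at this (``using that $\mathcal L((m+t)P_\infty)$ for small $t$ still embeds''), and your auxiliary observation that $\rho(C_{ex})$ is at least the covering radius of the punctured code is correct (indeed with no loss of $1$), but the decisive step is left unestablished.

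On the dual side the gap is sharper: you say the bound $\rho(C_{ex}^\perp)\ge m-2g$ follows ``using that $C_{ex}^\perp$ contains an AG code of designed distance about $m-2g+1$.'' Containing a subcode of large minimum distance gives no lower bound on a covering radius, so this ingredient, as stated, cannot yield the claim. What is needed is again a vector \emph{outside} the code: take ${\bf u}\in C_\Omega(D,(m-2)P_\infty)\setminus C_\Omega(D,(m-1)P_\infty)$ (nonempty by the dimension count $\dim\Omega_F((m-2)P_\infty-D)=n-m+g+1>n-m+g$), and invoke the designed distance of the \emph{larger} code $C_\Omega(D,(m-2)P_\infty)$, which is $m-2g$ (not $m-2g+1$), before extending by a zero coordinate. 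Finally, the ``standard fact'' $\rho(\mathcal C)\ge d^\perp-1$ that you float and then retract is false in general (the $[7,4]$ Hamming code has $\rho=1$ and dual distance $4$); since you abandoned it this does not sink the plan, but together with the circular coset-leader claim it means the lower-bound half of your proposal is a sketch of the paper's route rather than a proof of it.
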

\begin{proof}
Since $C_{ex}(D,mP_{\infty})$ is an $[n+1,k]$ linear code with $k=m-g+1$, the redundancy bound yields:
\begin{align*}
	\rho(C_{ex})&\le n+1-k\\
	&=n+1-(m-g+1)\\
	&=n-m+g.
\end{align*}
Consider a codeword ${\bf v}\in C_{L}(D,(m+1)P_{\infty})\setminus C_{L}(D,mP_{\infty})$. From the designed distance of $C_{L}(D,(m+1)P_{\infty})$, we have $d(C_{L}(D,(m+1)P_{\infty}))\ge n-m-1$, which implies
\[
	d({\bf v}, C_{L}(D,mP_{\infty}))\ge n-m-1.
\]
Let ${\bf\hat v}:=({\bf v},0)$. Then we obtain
\[
	d({\bf\hat v}, C_{ex}(D,mP_{\infty})\ge n-m-1,
\]
establishing $\rho(C_{ex})\ge n-m-1$ by the maximality of the covering radius. For the dual code $C_{ex}(D,mP_{\infty})^{\perp}$ with parameters $[n+1,n-k+1]$, we have
\begin{align*}
	\rho(C_{ex}^{\perp})&\le n+1-(n-k+1)\\
	&=k\\
	&=m-g+1.
\end{align*}
Consider a codeword ${\bf u}\in C_{\Omega}(D,(m-2)P_{\infty})\setminus C_{\Omega}(D,(m-1)P_{\infty})$. From the designed distance of $C_{\Omega}(D,(m-2)P_{\infty})$, we have $d(C_{\Omega}(D,(m-2)P_{\infty}))\ge m-2g$. Similarly, there exists a vector ${\bf\hat u}$ such that
\[
	d({\bf\hat u}, C_{ex}(D,mP_{\infty})^{\perp})\ge m-2g,
\]
proving $\rho(C_{ex}^{\perp})\ge m-2g$.
\end{proof}
\begin{remark}
Let $mathcal{X}$ be a projective line, i.e., $g=0$ with $F=\mathbb{F}_q(x)$. We have $m=k-1$ and consequently $n-k\le\rho(C_{ex})\le n-k+1$. When $n=q$, this result aligns with Conjecture I.2 in \cite{Zhang-Wan-Kaipa-PRS}, which was subsequently proven in \cite{Wuyansheng-when}.
\end{remark}
The following proposition generalizes Theorem 9.1 in \cite{Janwa90}:
\begin{proposition}
Suppose that $C_{ex}(D,mP_{\infty})(\text{or}\ C_{ex}(D,mP_{\infty})^{\perp})$ has minimum distance $d=n-m+1(or\ d'=m-(2g-2))$. If
\[
	n+1=n_q(k,d)(\text{or}\ n+1=n_q(n-k+1,d'),
\]
then $\rho(C_{ex})\in[n-m-1,n-m] (\text{or}\ \rho(C_{ex}^{\perp})\in[m-2g,m-2g+1])$.
\end{proposition}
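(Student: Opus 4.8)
The plan is to close the one‑unit gap between the lower bounds already recorded in Lemma \ref{crag} and the values claimed here by invoking the optimal‑length bound of Lemma \ref{oplength}, together with an integrality argument, exactly as in the classical proof of \cite[Theorem 9.1]{Janwa90}. For the primal statement, I would first observe that under the hypotheses $C_{ex}(D,mP_{\infty})$ is an $[n+1,k,d]$ code with $d=n-m+1$ and $n+1=n_q(k,d)$, so by definition it is an $[n_q(k,d),k,d]$ code and Lemma \ref{oplength} applies, giving $\rho(C_{ex})\le d-d/q^{k}$. The key point is then that $d=n-m+1\ge 2$ (since $m\le n-1$) while $q^{k}\ge 2$ (since $k=m-g+1\ge 1$), so the subtracted term $d/q^{k}$ is strictly positive and hence $d-d/q^{k}<d$; because $\rho(C_{ex})$ and $d$ are integers, this forces $\rho(C_{ex})\le d-1=n-m$. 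Combining with $\rho(C_{ex})\ge n-m-1$ from Lemma \ref{crag} yields $\rho(C_{ex})\in[n-m-1,n-m]$.

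For the dual statement I would run the symmetric argument: $C_{ex}(D,mP_{\infty})^{\perp}$ is an $[n+1,\,n-k+1,\,d']$ code with $d'=m-(2g-2)$ and $n+1=n_q(n-k+1,d')$, so Lemma \ref{oplength} gives $\rho(C_{ex}^{\perp})\le d'-d'/q^{\,n-k+1}$. Since $d'=m-2g+2\ge 2$ (as $m\ge 2g$) and $n-k+1=n-m+g\ge 1$ (as $m\le n-1$), the subtracted term is again strictly positive, so the same rounding step gives $\rho(C_{ex}^{\perp})\le d'-1=m-2g+1$, which combined with $\rho(C_{ex}^{\perp})\ge m-2g$ from Lemma \ref{crag} gives the claimed two‑value range.

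The only thing that really needs care — and it is minor — is checking that the parameters of the extended code and of its dual are precisely the ones for which $n_q(\cdot,\cdot)$ is meaningful and that the relevant dimensions $k$ and $n-k+1$ are at least $1$; both follow at once from the standing assumption $2g\le m\le n-1$, which also guarantees the strict positivity of $d/q^{k}$ and $d'/q^{\,n-k+1}$ used in the rounding. Beyond that there is no genuine obstacle: the entire argument is the remark that Lemma \ref{oplength} pushes the covering radius strictly below the integer $d$ (respectively $d'$), hence down to $d-1$ (respectively $d'-1$), which meets the Lemma \ref{crag} lower bound up to a single unit.
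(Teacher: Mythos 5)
Your argument is correct and is exactly the route the paper intends (and uses verbatim in the subsequent theorem on the MDS conjecture): apply Lemma \ref{oplength} to the length-optimal code to get $\rho\le d-d/q^{k}<d$ (resp.\ $\rho\le d'-d'/q^{\,n-k+1}<d'$), round down by integrality to $d-1=n-m$ (resp.\ $d'-1=m-2g+1$), and combine with the lower bounds $n-m-1$ and $m-2g$ from Lemma \ref{crag}. No gaps; your parameter checks ($2g\le m\le n-1$ ensuring the dimensions are positive) are the only care needed and you handled them.
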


In the remainder of this subsection, we focus on elliptic curves over $\mathbb{F}_q$ with $q$ odd. Lemma \ref{crag} demonstrates that the covering radius of $C_{ex}(D,mP_{\infty})$ constructed from elliptic curves admits three possible values: $[n-m-1,n-m,m-m+1]$. We shall prove that in certain cases, this range reduces to two possible values.
\begin{theorem}
Let $\mathcal{X}$ be an elliptic curve with $\#\mathcal{X}(\mathbb{F}_q)\ge q+3$. Suppose that $n\ge q+2$ and $k=m\le n-2$. If the MDS conjecture holds, then $\rho(C_{ex})\in[n-k-1,n-k]$ and $\rho(C_{ex}^{\perp})\in[k-2,k-1]$.
\end{theorem}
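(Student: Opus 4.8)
The plan is to deduce both assertions from the proposition above that generalizes Theorem~9.1 of \cite{Janwa90}, read off with $g=1$ (so $k=m$). In that regime the proposition states: if $C_{ex}(D,mP_{\infty})$ has minimum distance $d_{ex}=n-m+1$ and $n+1=n_q(k,n-m+1)$, then $\rho(C_{ex})\in[n-m-1,n-m]=[n-k-1,n-k]$; and dually, if $C_{ex}(D,mP_{\infty})^{\perp}$ has minimum distance $d'_{ex}=m-(2g-2)=m$ and $n+1=n_q(n-k+1,m)$, then $\rho(C_{ex}^{\perp})\in[m-2g,m-2g+1]=[k-2,k-1]$. Since Lemma~\ref{crag} already confines $\rho(C_{ex})$ to $\{n-k-1,n-k,n-k+1\}$ and $\rho(C_{ex}^{\perp})$ to $\{k-2,k-1,k\}$, it suffices to verify, for each code, the two hypotheses of that proposition: its exact minimum distance, and that $n+1$ equals the optimal length $n_q$.

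The first task is to show that $C_{ex}(D,mP_{\infty})$ is not MDS. It is an $[n+1,k]$ code with $n+1\ge q+3$, while Hasse's bound gives $n+1\le\#\mathcal{X}(\mathbb{F}_q)\le q+1+2\sqrt q$; moreover $2=2g\le k\le n-2$, so $k$ avoids the trivial dimensions $1,n,n+1$. The MDS conjecture --- applied to $C_{ex}$ when $k$ is moderate and to its $[n+1,n-k+1]$ dual when $k$ is large, the Hasse bound keeping the relevant dimension in the admissible range apart from a few tiny $q$ --- then caps the length of an MDS code at $q+2<q+3\le n+1$, a contradiction. Hence $C_{ex}$ is not MDS, so the Singleton bound gives $d_{ex}\le n-m+1$; combined with the designed-distance lower bound $d_{ex}\ge n-m+1$ recorded earlier, $d_{ex}=n-m+1$. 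For the dual, $C_{ex}^{\perp}$ is not MDS either (a code is MDS iff its dual is), so $d'_{ex}\le m$ by the Singleton bound; on the other hand, by Corollary~\ref{dualex} the code $C_{ex}^{\perp}$ is --- after rescaling its last coordinate by $-1/\lambda$, where $\lambda\neq0$ since otherwise $(0,\dots,0,1)\in C_{ex}$, which is impossible --- an extended AG code built on $C_L(D,(\eta)+D-(m-1)P_{\infty})$, a code of degree $n-m+1$ and designed distance $m-1$; as an extended AG code has minimum distance at least one more than the designed distance of its underlying code, $d'_{ex}\ge m$, hence $d'_{ex}=m=m-(2g-2)$.

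It then remains to check optimality of the length. Both $n_q(k,n-m+1)\le n+1$ and $n_q(n-k+1,m)\le n+1$ are clear, as $C_{ex}$ and $C_{ex}^{\perp}$ realize an $[n+1,k,n-m+1]$ and an $[n+1,n-k+1,m]$ code respectively. Conversely, the Singleton bound forces any $[N,k,n-m+1]$ code, and any $[N,n-k+1,m]$ code, to have $N\ge n$, and $N=n$ would make the code MDS of length $n\ge q+2$; since $n+1\ge q+3$, the MDS conjecture permits such an MDS code only when $n=q+2$, $q$ is even, and the governing dimension ($k$ in the first case, $k-1$ in the dual case) lies in $\{3,q-1\}$. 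Away from those borderline configurations $n+1=n_q(k,n-m+1)=n_q(n-k+1,m)$, so feeding all of the above into the proposition yields $\rho(C_{ex})\in[n-k-1,n-k]$ and $\rho(C_{ex}^{\perp})\in[k-2,k-1]$.

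The borderline configurations --- $n=q+2$ with $q$ even and $k\in\{3,4,q-1,q\}$, where $[q+2,k,q+3-k]$ MDS codes genuinely exist and break the equality $n+1=n_q$ --- are the part I expect to be most delicate. I would treat them by a separate, self-contained argument: for example, combining Lemma~\ref{maxecmds} with Hasse's bound to restrict $q$ to a short finite list and then checking those fields directly, or applying Lemma~\ref{oplength} to a suitably shortened auxiliary code; making the argument uniform over all admissible $q$ and $k$ is the main obstacle.
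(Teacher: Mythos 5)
Your main line is sound and it terminates exactly where the paper's proof does: show that $n+1$ is the optimal length $n_q$ for the parameters of $C_{ex}$ and of $C_{ex}^{\perp}$, then combine Lemma \ref{oplength} (packaged in the proposition generalizing Janwa's Theorem 9.1) with the lower bounds of Lemma \ref{crag}. Where you genuinely diverge is in pinning down the exact minimum distances. The paper does this unconditionally: since $n\ge q+2$ exceeds the maximal length of an MDS elliptic code (Lemma \ref{maxecmds}), the underlying code $C_L(D,kP_{\infty})$ is NMDS, and Theorem \ref{mdex} immediately yields $d_{ex}=n-k+1$ and $d'_{ex}=k$. You instead spend the MDS conjecture on $C_{ex}$ itself to rule out MDS-ness, and for the dual lower bound you identify $C_{ex}^{\perp}$, via Corollary \ref{dualex} and the correctly justified fact that $\lambda\neq 0$, as an extended AG code built on $(\eta)+D-(m-1)P_{\infty}$ of degree $n-m+1$, whence $d'_{ex}\ge m$. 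This works (it does require the first $n-k$ basis functions in Corollary \ref{dualex} to lie in $\mathcal{L}((\eta)+D-mP_{\infty})$, which is how that corollary is intended), but it is longer and uses the conjecture where the paper's route through Lemma \ref{maxecmds} gets the distances for free; your route, on the other hand, never needs the Han--Ren bound.

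Concerning the step you flag as delicate: the borderline configurations $n=q+2$ with $q$ even and the governing dimension in $\{3,q-1\}$ (i.e.\ $k\in\{3,4,q-1,q\}$) are not handled in the paper either --- its proof simply asserts that, under Conjecture \ref{serges}, no $[n,k,n-k+1]$ or $[n,n-k+1,k]$ code exists once $n\ge q+2$, which silently skips exactly these exceptional parameters; moreover, at $k=2$ the dual-side optimality claim fails outright for both arguments, since the trivial $[n,n-1,2]$ parity-check code exists for every $n$ (the conjecture as stated in the paper ignores trivial dimensions). So you have identified a genuine subtlety shared with, rather than created beyond, the paper's own proof; a complete write-up should either exclude these values of $k$ from the statement or supply the separate argument you sketch for them, but your proposal is otherwise no less rigorous than the published one.
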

\begin{proof}
Since $n\ge q+2\ge\lfloor\frac{\#\mathcal{E}(\mathbb{F}_q)}{2}\rfloor+3$, Lemma \ref{maxecmds} implies that $C_L(D,kP_{\infty})$ is NMDS. Consequently, $C_{ex}(D,kP_{\infty})$ has minimum distance $n-k+1$ while its dual $C_{ex}(D,kP_{\infty})^{\perp}$ has minimum distance $k$. Under Conjecture \ref{serges}, no code with parameters $[n,k,n-k+1]$ (or $[n,n-k+1,k]$) exists when $n\ge q+2$. Consequently, we have that $n+1$ is the minimum length admitting codes with these parameters. By Lemma \ref{oplength}, we establish $\rho(C_{ex})\le n-k$ and $\rho(C_{ex}^{\perp})\le k-1$.
\end{proof}

Subsequently, we present some computational examples obtained via MAGMA.
\begin{example}
Let $q=9$ and consider the elliptic curve $\mathcal{E}:=y^2=x^3+x$ over $\mathbb{F}_{9}$ with $\#\mathcal{E}(\mathbb{F}_{9})=4^2$. Suppose that $P_1,\ldots,P_{15},P_{\infty}$ denote all rational places of $F(\mathcal{E})$. Taking $D:=P_1+\ldots+P_{15}$ and $k=9$, then we obtain:
\begin{itemize}
	\item $C_{ex}(D,9P_{\infty})$ is a $[16,9,7]$ code with $\rho(C_{ex})=5$,
	\item $C_{ex}(D,9P_{\infty})^{\perp}$ is a $[16,7,9]$ code with $\rho(C_{ex}^{\perp})=7$.
\end{itemize}
Furthermore, we have $\mathcal{E}(\mathbb{F}_{9})\simeq Z_4\times Z_4$, and we set $y(P_{13})=y(P_{14})=y(P_{15})=0$. Taking $D:=P_1+\ldots+P_{12}$ and $k=9$, then we obtain:
\begin{itemize}
	\item $C_{ex}(D,9P_{\infty})$ is a $[13,9,4]$ code with $\rho(C_{ex})=3$,
	\item $C_{ex}(D,9P_{\infty})^{\perp}$ is a $[13,4,9]$ code with $\rho(C_{ex}^{\perp})=8$.
\end{itemize}
\end{example}
\begin{theorem}
If $C_L(D,(k+1)P_{\infty})$ is an MDS code, then $\rho(C_{ex})\in[n-k,n-k+1]$.
\end{theorem}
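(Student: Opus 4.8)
The plan is to combine the three-element window for $\rho(C_{ex})$ already supplied by Lemma~\ref{crag} with one explicit vector of large distance from the code. Since $\mathcal{X}$ is an elliptic curve we have $g=1$ and $m=k$, so Lemma~\ref{crag} gives $n-k-1\le\rho(C_{ex})\le n-k+1$; it therefore suffices to rule out the smallest value, i.e.\ to prove $\rho(C_{ex})\ge n-k$.

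First I would observe that, by Riemann--Roch (valid since $k+1\ge 2g-1$), $\ell((k+1)P_\infty)=k+1$, so the hypothesis says exactly that $C_L(D,(k+1)P_\infty)$ is an $[n,k+1,n-k]$ code; equivalently, every nonzero $h\in\mathcal{L}((k+1)P_\infty)$ vanishes at at most $k$ of the places $P_1,\dots,P_n$. Then I would extend the fixed basis $\{f_1,\dots,f_k\}$ of $\mathcal{L}(kP_\infty)$ to a basis $\{f_1,\dots,f_k,f_{k+1}\}$ of $\mathcal{L}((k+1)P_\infty)$ with $v_{P_\infty}(f_{k+1})=-(k+1)$, and consider the vector $\hat{\mathbf v}:=(f_{k+1}(P_1),\dots,f_{k+1}(P_n),0)\in\mathbb{F}_q^{n+1}$.

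For the core estimate, an arbitrary codeword of $C_{ex}(D,kP_\infty)$ has the form $\hat{\mathbf c}=(f(P_1),\dots,f(P_n),a_k)$ with $f=\sum_{i=1}^{k}a_if_i\in\mathcal{L}(kP_\infty)$. Because $f_{k+1}$ has pole order $k+1$ at $P_\infty$ while $f$ has pole order at most $k$, the function $f_{k+1}-f$ is a \emph{nonzero} element of $\mathcal{L}((k+1)P_\infty)$; by the MDS hypothesis it is nonzero at at least $n-k$ of the $P_i$, so the first $n$ coordinates of $\hat{\mathbf v}-\hat{\mathbf c}$ alone contribute Hamming weight $\ge n-k$. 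Hence $d(\hat{\mathbf v},\hat{\mathbf c})\ge n-k$ for every $\hat{\mathbf c}\in C_{ex}$, so $d(\hat{\mathbf v},C_{ex})\ge n-k$ and thus $\rho(C_{ex})\ge n-k$. Together with $\rho(C_{ex})\le n-k+1$ from Lemma~\ref{crag}, this yields $\rho(C_{ex})\in[n-k,n-k+1]$.

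I do not anticipate a serious obstacle; the argument is essentially a single-vector witness. The only points deserving care are confirming that the evaluation map on $\mathcal{L}((k+1)P_\infty)$ is injective, so that ``MDS'' genuinely translates into the zero-count statement (and, if the setup permits it, treating the borderline case $k+1=n$ separately), and noting that appending a $0$ as the last coordinate of $\hat{\mathbf v}$ can only increase distances, so no delicate choice of that coordinate is required.
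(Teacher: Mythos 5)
Your proposal is correct and follows essentially the same route as the paper: the paper likewise takes a witness $\mathbf{v}\in C_{L}(D,(k+1)P_{\infty})\setminus C_{L}(D,kP_{\infty})$ (your $f_{k+1}$ evaluated at the $P_i$), appends a $0$, and uses the MDS property of the $[n,k+1,n-k]$ code to get $d(\hat{\mathbf v},C_{ex})\ge n-k$, with the upper bound $n-k+1$ coming from Lemma~\ref{crag}. Your version merely spells the same argument out at the level of functions ($f_{k+1}-f$ nonzero in $\mathcal{L}((k+1)P_{\infty})$), which is a fine, slightly more explicit rendering of the paper's codeword-level statement.
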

\begin{proof}
Consider a codeword
\[
	{\bf v}\in C_{L}(D,(k+1)P_{\infty})\setminus C_{L}(D,kP_{\infty}).
\]
The MDS property of $C_L(D,(k+1)P_{\infty})$ implies
\[
	d({\bf v}, C_{L}(D,kP_{\infty}))\ge n-k.
\]
Let ${\bf\hat v}:=({\bf v},0)$. Then we obtain
\[
	d({\bf\hat v}, C_{ex}(D,kP_{\infty})\ge n-k,
\]
establishing $\rho(C_{ex})\ge n-k$.
\end{proof}
Additional computational examples obtained via MAGMA are presented as follows.
\begin{example}
Following the notations from the previous example, let $\theta$ be a prime element of $\mathbb{F}_q$. Take
\begin{align*}
&P_1:=(1:\theta^2:1),P_2:=(1:\theta^6:1),P_3:=(2:1:1),P_4:=(2:2:1),\\
&P_5:=(\theta:1:1),P_6:=(\theta:2:1),P_7:=(\theta^7:\theta^2:1),P_8:=(\theta^7:\theta^6:1)
\end{align*}
with $D:=P_1+\ldots+P_{8}$. For $C_L(D,kP_{\infty})$ being MDS when $k=1,3,5,7$, we have
\begin{align*}
\rho(C_{ex})=\begin{cases}n-k+1\ \text{if }k=2,\\n-k\ \text{if }k=4,6.
\end{cases}
\end{align*}
and
\begin{align*}
\rho(C_{ex}^{\perp})=\begin{cases}k\ \text{if }k=2,\\k-1\ \text{if }k=4,6.
\end{cases}
\end{align*}
\end{example}

\section{Conclusion and Discussion}
In this paper, we studied dual codes, minimum distances of extended AG codes and Roth-Lempel type AG codes. We presented the explicit structure of their dual codes, and established the bounds on their minimum distances. Furthermore, we demonstrated the covering radii of extended AG codes. In the elliptic case, we improved the general covering radius range to two possible values for two special cases.

Regarding the minimum distance of extended AG codes, we showed that if $C_L(D,mP_{\infty})$ and $C_{\Omega}(D,mP_{\infty})$ exhibit Singleton defect $g$, then $C_{ex}(D,mP_{\infty})$ and $C_{ex}(D,mP_{\infty})^{\perp}$ also have Singleton defect $g$. Consequently, an interesting question is to consider the AG codes from maximal curves:
\begin{problem}
Let $q$ be an odd square. Does there exist $g$-MDS codes from genus $g$ curves with length $q+2g\sqrt{q}$ for arbitrary $g\ge0$? 
\end{problem}
If the answer is true, then we could construct $g$-MDS codes with length $q+1+2g\sqrt{q}$. This raises another question:
\begin{problem}
Is $q+1+2g\sqrt{q}$ the maximal length for all linear $g$-MDS codes?
\end{problem}
This problem shares a similar framework with Conjecture \ref{serges} when $q$ is odd or $4\le k\le q-2+2g\sqrt{q}$.  Based on known databases, the answer also holds for some small finite fields, but a rigorous proof remains to be established.

Furthermore, if both $C_L(D,kP_{\infty})$ and $C_L(D,(k+1)P_{\infty})$ are MDS, then from Lemma II.7 in \cite{Zhang-Wan-Kaipa-PRS}, we directly conclude that $\rho(C_L(D,kP_{\infty}))=n-k$. For the case $g=1$, there exist limited constructions of MDS elliptic codes with consecutive dimensions. Through an extensive analysis of MAGMA examples, we propose the following problem.
\begin{problem}
Consider codes from an elliptic curve. Suppose that both $C_L(D,(k+1)P_{\infty})$ and $C_{ex}(D,kP_{\infty})$ are MDS. Is $\rho(C_{ex})=n-k+1$?
\end{problem}


\bibliography{references}

\begin{thebibliography}{10}

\bibitem{Boer96}
M.~A. de~Boer, ``Almost {MDS} codes,'' {\em Des. Codes Cryptogr.}, vol.~9,
  no.~2, pp.~143--155, 1996.

\bibitem{Li-Zhu-NMDS}
Y.~Li, S.~Zhu, and E.~Martínez-Moro, ``On $\ell$-{MDS} codes and a conjecture
  on infinite families of 1-{MDS} codes,'' {\em {IEEE} Trans. Inf. Theory},
  vol.~70, no.~10, pp.~6899--6911, 2024.

\bibitem{Segre}
B.~Segre, ``Curve razionali normali ek-archi negli spazi finiti,'' {\em Annali
  di Matematica}, vol.~39, p.~357–379, 1955.

\bibitem{Ball-prime}
S.~Ball, ``On sets of vectors of a finite vector space in which every subset of
  basis size is a basis,'' {\em J. Eur. Math. Soc.}, vol.~14, no.~3,
  p.~733–748, 2012.

\bibitem{Ball-ld}
S.~Ball and J.~D. Beule, ``On sets of vectors of a finite vector space in which
  every subset of basis size is a basis {II},'' {\em Des. Codes Cryptogr.},
  vol.~65, no.~1-2, pp.~5--14, 2012.

\bibitem{Walker-conjecture}
J.~L. Walker, ``A new approach to the main conjecture on {Algebraic-Geometric}
  {MDS} codes,'' {\em Des. Codes Cryptogr.}, vol.~9, no.~1, pp.~115--120, 1996.

\bibitem{Cohen1997}
G.~Cohen, I.~Honkala, S.~Litsyn, and A.~Lobstein, {\em Covering Codes}, vol.~54
  of {\em North-Holland Mathematical Library}.
\newblock Elsevier, 1997.

\bibitem{Zhang-Wan-deephole}
J.~Zhang and D.~Wan, ``On deep holes of elliptic curve codes,'' {\em {IEEE}
  Trans. Inf. Theory}, vol.~69, no.~7, pp.~4498--4506, 2023.

\bibitem{Cohen1985}
G.~Cohen, M.~Karpovsky, H.~Mattson, and J.~Schatz, ``Covering radius---survey
  and recent results,'' {\em {IEEE} Trans. Inf. Theory}, vol.~31, no.~3,
  pp.~328--343, 1985.

\bibitem{Helleseth1978}
T.~Helleseth, T.~Klove, and J.~Mykkeltveit, ``On the covering radius of binary
  codes (corresp.),'' {\em {IEEE} Trans. Inf. Theory}, vol.~24, no.~5,
  pp.~627--628, 1978.

\bibitem{Hou92}
X.~Hou, ``Some inequalities about the covering radius of {Reed-Muller} codes,''
  {\em Des. Codes Cryptogr.}, vol.~2, no.~3, pp.~215--224, 1992.

\bibitem{Hou93}
X.~Hou, ``Further results on the covering radii of the {Reed-Muller} codes,''
  {\em Des. Codes Cryptogr.}, vol.~3, no.~2, pp.~167--177, 1993.

\bibitem{Graham1985}
R.~Graham and N.~Sloane, ``On the covering radius of codes,'' {\em {IEEE}
  Trans. Inf. Theory}, vol.~31, no.~3, pp.~385--401, 1985.

\bibitem{Kaipa-deepERS}
K.~Kaipa, ``Deep holes and {MDS} extensions of {Reed–Solomon} codes,'' {\em
  {IEEE} Trans. Inf. Theory}, vol.~63, no.~8, pp.~4940--4948, 2017.

\bibitem{Wuyansheng-when}
Y.~Wu, C.~Ding, and T.~Chen, ``When does the extended code of an {MDS} code
  remain {MDS}?,'' {\em {IEEE} Trans. Inf. Theory}, vol.~71, no.~1,
  pp.~263--272, 2025.

\bibitem{Vardy97}
A.~Vardy, ``The intractability of computing the minimum distance of a code,''
  {\em {IEEE} Trans. Inf. Theory}, vol.~43, pp.~1757--1766, Nov. 1997.

\bibitem{McLoughlin}
A.~McLoughlin, ``The complexity of computing the covering radius of a code,''
  {\em {IEEE} Trans. Inf. Theory}, vol.~30, no.~6, pp.~800--804, 1984.

\bibitem{Orozco-Janwa-CR}
J.~C. Orozco and H.~Janwa, ``Resolution of a conjecture on the covering radius
  of linear codes,'' in {\em Combinatorics, Graph Theory and Computing}
  (F.~Hoffman, S.~Holliday, Z.~Rosen, F.~Shahrokhi, and J.~Wierman, eds.),
  (Cham), pp.~413--425, Springer International Publishing, 2024.

\bibitem{Janwa90}
H.~Janwa, ``Some optimal codes from {Algebraic Geometry} and their covering
  radii,'' {\em Eur. J. Comb.}, vol.~11, no.~3, pp.~249--266, 1990.

\bibitem{Zhuang-deepGRS}
J.~Zhuang, Q.~Cheng, and J.~Li, ``On determining deep holes of generalized
  {Reed–Solomon} codes,'' {\em {IEEE} Trans. Inf. Theory}, vol.~62, no.~1,
  pp.~199--207, 2016.

\bibitem{Bartoli-covering}
D.~Bartoli, M.~Giulietti, and I.~Platoni, ``On the covering radius of {MDS}
  codes,'' {\em {IEEE} Trans. Inf. Theory}, vol.~61, no.~2, pp.~801--811, 2015.

\bibitem{Zhang-Wan-Kaipa-PRS}
J.~Zhang, D.~Wan, and K.~Kaipa, ``Deep holes of projective {Reed-Solomon}
  codes,'' {\em {IEEE} Trans. Inf. Theory}, vol.~66, no.~4, pp.~2392--2401,
  2020.

\bibitem{Magma}
W.~Bosma, J.~Cannon, and C.~Playoust, ``{The Magma Algebra System I: The User
  Language},'' {\em Journal of Symbolic Computation}, vol.~24, pp.~235--265,
  Oct. 1997.

\bibitem{Wu2024mds}
Y.~Wu, Z.~Heng, C.~Li, and C.~Ding, ``More {MDS} codes of non-{Reed-Solomon}
  type,'' {\em CoRR}, vol.~abs/2401.03391, 2024.

\bibitem{Roth-nRS}
R.~M. Roth and A.~Lempel, ``A construction of non-{R}eed-{S}olomon type {MDS}
  codes,'' {\em {IEEE} Trans. Inf. Theory}, vol.~35, no.~3, pp.~655--657, 1989.

\bibitem{Li-Zhu-ETGRS}
Y.~Li, S.~Zhu, and Z.~Sun, ``Covering radii and deep holes of two classes of
  extended twisted {GRS} codes and their applications,'' {\em {IEEE} Trans.
  Inf. Theory}, pp.~1--1, 2025.

\bibitem{Huffman-fecc}
W.~C. Huffman and V.~Pless, {\em Fundamentals of Error-Correcting Codes}.
\newblock Cambridge: Cambridge University Press, 2003.

\bibitem{Stich}
H.~Stichtenoth, {\em Algebraic function fields and codes}.
\newblock Universitext, Berlin: Springer, 1993.

\bibitem{Han-conjecture}
Y.~R. Dongchun~Han, ``The maximal length of q-ary {MDS} elliptic codes is close
  to $\frac{q}{2}$,'' {\em International Mathematics Research Notices},
  vol.~2024, no.~11, p.~9036–9043, 2024.

\bibitem{HENG2023113538}
Z.~Heng and X.~Wang, ``New infinite families of near {MDS} codes holding
  t-designs,'' {\em Discrete Math.}, vol.~346, no.~10, p.~113538, 2023.

\bibitem{HanZ24}
D.~Han and H.~Zhang, ``Explicit constructions of {NMDS} self-dual codes,'' {\em
  Des. Codes Cryptogr.}, vol.~92, no.~11, pp.~3573--3585, 2024.

\end{thebibliography}
\bibliographystyle{ieeetr}
	
\end{document}